\newtheorem{theorem}{Theorem}
\theoremstyle{definition}
\theoremstyle{remark}
\newcommand{\ds}{\displaystyle}
\newcommand{\norm}[1]{\left\lVert#1\right\rVert}
\begin{document}
This paper has been accepted for publication on the IEEE Transactions on Wireless Communications

\copyright 2020 IEEE. Personal use of this material is permitted. Permission from IEEE must be obtained for all other uses, in any current or future media, including reprinting/republishing this material for advertising or promotional purposes, creating new collective works, for resale or redistribution to servers or lists, or reuse of any copyrighted  component of this work in other works.”

\newpage

\title{Communications and Radar Coexistence in the Massive MIMO Regime: Uplink Analysis}

\author{Carmen D'Andrea, {\em Student Member}, {\em IEEE}, Stefano Buzzi, {\em Senior Member}, {\em IEEE}, and  Marco Lops,
{\em Fellow}, {\em IEEE}
\thanks{This paper was partly presented at the \textit{19th IEEE International Workshop on
Signal Processing Advances in Wireless Communications}, Kalamata, Greece, June 2018, and will be partly presented at the \textit{44th IEEE International Conference on Acoustics, Speech, and Signal Processing}, Brighton, U.K., May 2019.}
\thanks{C. D'Andrea and S. Buzzi are with the Department of Electrical and Information Engineering, University of Cassino and Lazio Meridionale, I-03043 Cassino, Italy (\{carmen.dandrea, buzzi\}@unicas.it). M. Lops is with the Department of Electrical Engineering and Information Technologies, University "Federico II" of Naples, Naples, Italy (lops@unina.it). The work of C. D'Andrea and S. Buzzi has been supported by the MIUR program ``Dipartimenti di Eccellenza 2018-2022". }
}
\maketitle

\begin{abstract}
This paper considers the uplink of a massive MIMO communication system using 5G New Radio-compliant multiple access, which has to co-exist with a radar system using the same frequency band.  A  system model taking into account the reverberation (clutter) produced by the radar system onto the massive MIMO receiver is proposed. In this scenario, several receivers for uplink channel estimation and data detection are proposed, ranging from the simple channel-matched beamformer to the zero-forcing and linear minimum mean square error receivers for clutter disturbance rejection, under the two opposite situations of perfectly known and completely unknown clutter covariance.
A theoretical analysis is also provided, deriving a lower bound on the achievable uplink spectral efficiency and the mutual information between the input Gaussian-encoded symbols and the observables available at the communication receiver of the cellular massive MIMO system: regarding the latter, in particular, it is shown that, 
in the large antenna number regime, and under the assumption of perfect channel state information (CSI), the effect of radar clutter at the base station is suppressed and single-user capacity may be restored.  
Numerical results, illustrating the performance of the proposed detection schemes, confirm the findings of the theoretical analysis, and permit quantifying the system robustness to clutter effect for increasing number of antennas at the base station.
\end{abstract}

% Note that keywords are not normally used for peerreview papers.
\begin{IEEEkeywords}
Massive MIMO, Radar signal processing, Co-existence, 5G wireless networks, multicarrier modulation, clutter modeling.
\end{IEEEkeywords}

\section{Introduction}
Radar-Communications co-existence in the same frequency band 
has recently aroused a vibrant academic and industrial interest \cite{zhengSPM}, since it represents one of the key enabling technologies to second the inevitable scaling
up of the carrier frequencies of terrestrial networks \cite{griffiths2015radar}. In fact, the standard evolution from GSM to the fifth generation (5G) has produced a progressive invasion of frequency bands traditionally used by radar systems. Such key words as {\em spectrum sharing}, {\em Dual Function Radar Communication (DFRC)}, {\em Convergence} have become commonplace in the technical jargon to denote the different philosophies and architectures introduced so far in this area. It is in fact anticipated that in the near future 
not only the $2-8$GHz frequency range, comprising the traditional $S$ and $C$ radar bands, but also the 24 GHz and 60 GHz bands, devoted to very high resolution mapping, scientific remote sensing and airport (short-range) surveillance, will be inevitably used for both communication and sensing.  Accordingly, the Defense Advanced Research Projects Agency (DARPA)  recently announced the Shared SPectrum Access for Radar and Communications
(SSPARC) program \cite{evans2016shared}. 
A possible classification of the approaches proposed so far might follow the taxonomy proposed in \cite{zhengSPM}, wherein the major categorization is between 
architectures where both the radar and the communication system have active transmitters and those wherein transmission takes place in a unique integrated platform, thus allowing a joint co-design.

To the former family belong both {\em selfish} and {\em holistic} architectures. Selfish design, in particular, focuses the attention on one system, i.e., communication or radar, and adopts strategies to counteract the interference induced by the spectral overlap without paying attention to the performance of the other co-existing system. Selfish architectures include radar-centric systems \cite{Deng2013Interference,Aubry2014Radar,Aubry2015Optimizing}, or communication-centric structures, where the interference induced by the radar is dealt with either at the receiver \cite{JSTSP} or, in the presence of some CSI, directly at the transmitter \cite{tuninetti}. Holistic architectures, conversely, rely on the concept of heavy cooperation between the transmitting systems, whereby the communication codebook and the radar waveform(s) are jointly designed, so as to guarantee the performance of both systems. This idea, first proposed in \cite{li2016joint}, has been successively developed to account for a number of possible scenarios, and in particular for the reverberation produced by the radar (clutter) on its own receiver and on the communication receiver \cite{lipetropulu,li2016mimo,Li2016Optimum,qian2018joint}. Some form of cooperation is also assumed in a class of systems which borrow channel sensing techniques from the cognitive radio literature to detect and exploit spectral holes so as to allow co-existence with no spectral overlap \cite{CME18}.

On the other hand, a completely different philosophy is to grant {\em functional co-existence} without generating mutual interference; the transmission phase for all co-existing systems takes place in a unique integrated platform, thus leading to the aforementioned concept of DFRC \cite{blunt2010embedding,Blunt,hassanien2016dual,liu2018dual}. In this case,  the communication signal is typically {\em embedded} in the radar signal, by exploiting either its latency or the transmit antenna beam side-lobes, so that no real spectral overlap takes place.

Unfortunately, none of the above approaches appears applicable if - as it will be the case with  5G-and-beyond systems - the wireless network is to be {\em added} to pre-existing sensing systems and full cooperation cannot be realized, due to, e.g., security reasons. On the other hand, a consensus has now been reached on the fact that one of the most damaging effects of co-existence is the clutter produced by a search radar onto the base-station of the wireless network, which ultimately may result in a dramatic reduction of the uplink rates. Under these circumstances, also in consideration of the different order of magnitude of the powers in play, there is no prior guarantee of the feasibility of a full spectral overlay. 

\subsection{Paper contribution}
The aim of the present contribution is to demonstrate that a 5G wireless network, employing a standard Orthogonal Frequency Division Multiplexing (OFDM) modulation format and endowed with a {\em massive} MIMO array at the base station may successfully co-exist with a wide-beam search radar, taking huge advantage of the massive nature of the receive array. Massive MIMO was introduced by Marzetta, in his pioneering paper \cite{Marzetta10}; this technology represents a solid milestone of current and future wireless systems \cite{MassiveMIMO_book_Bjornson,marzetta2016fundamentals}. Massive MIMO amounts to use a very large number of service antennas (e.g., hundreds or thousands) in order to serve a lower number of mobile users
with the time-division-duplex (TDD) protocol so as to exploit uplink/downlink channel reciprocity. In particular, our focus is on the effect that the massive structure may - or may not - have on clutter mitigation in the two relevant phases of the uplink haul, i.e. the {\em training phase} for user channel acquisition and the {\em demodulation phase} for data transmission. 
To the best of authors' knowledge, this is the first paper to study the robustness of a massive MIMO cellular system to the radar interference co-existing in the same frequency bands, while preliminary investigations on this issue have appeared in \cite{radar_massiveMIMO1} and
\cite{radar_massiveMIMO2}.
The contribution of this paper can be thus summarized as follows. 
First of all, 
inspired by the 5G standard, we consider a Single-Carrier (SC) FDMA operating at a carrier frequency of $3$GHz, and a co-existing radar system employing a {\em sophisticated}\footnote{This term indicates that the duration-bandwidth product of the radar waveform is considerably larger than one.} waveform with the same bandwidth. We present a model for the signal received at the Base Station (BS) array, accounting for the effect of the radar reflections on the whole set of packets entering the radar Pulse Repetition Time (PRT).
We then perform an information-theoretic analysis showing that the massive MIMO structure, under the assumption of perfect CSI and single-user transmission, 
is intrinsically resistant to the clutter effect in the limit of arbitrarily large number of antennas at the BS; in particular we show that the mutual information between the Gaussian-distributed information symbols and the observables at the BS antenna array becomes independent of the clutter contribution in the limit of large number of antennas. 
Next, several practical receivers for uplink channel estimation and data detection at the BS are proposed, examining both the case in which the clutter second-order statistics are known to the BS receiver, and the case in which no prior knowledge about the clutter can be assumed. A lower bound to the system Spectral Efficiency (SE) is also analytically derived. 
Extensive simulation results are finally provided  in order to corroborate the analytic findings and to show the performance of the different proposed channel estimation and data detection structures. 

This paper is organized as follows. In the next section we illustrate the considered system model,
along with the model of the received signal at the BS, both in the case of uplink data transmission and uplink pilot transmission for channel estimation. Section III contains the information-theoretic analysis of the system, while 
Section IV is devoted to the derivation of the considered  uplink channel estimation and data detection structures. In Section V the derivation of the lower bound to the system SE is reported, while numerical results are shown and discussed in Section VI. Finally, concluding remarks are given in 
Section VII.

\subsection{Notation}
\noindent
The following notation is used in the paper. The transpose, the inverse and the conjugate transpose of a matrix $\mathbf{A}$ are denoted by $\mathbf{A}^T$, $\mathbf{A}^{-1}$ and $\mathbf{A}^H$, respectively. The generalized Moore-Penrose inverse of a matrix $\mathbf{A}$ is denoted by $\mathbf{A}^{\dagger}$. The trace  and the determinant of the matrix $\mathbf{A}$ are denoted as tr$\left(\mathbf{A}\right)$ and det$\left(\mathbf{A}\right)$, respectively. The $N$-dimensional identity matrix is denoted as $\mathbf{I}_N$, the $(N \times M)$-dimensional matrix with all zero entries is denoted as $\mathbf{0}_{N \times M}$ and $\mathbf{1}_{N \times M} $ denotes a $(N \times M)$-dimensional matrix with unit entries. The vectorization operator is denoted by vec$(\cdot)$ and the Kronecker product is denoted by $\otimes$. The $(m,\ell$)-th entry  and the $\ell$-th column of the matrix $\mathbf{A}$ are denoted as $\left[\mathbf{A}\right]_{(m,\ell)}$ and $\left[\mathbf{A}\right]_{(:,\ell)}$, respectively. The block-diagonal matrix obtained from matrices $\mathbf{A}_1, \ldots, \mathbf{A}_N$ is denoted by blkdiag$\left( \mathbf{A}_1, \ldots, \mathbf{A}_N\right)$. The Dirac's delta pulse is denoted as $\delta(t)$. The statistical expectation operator is denoted as $\mathbb{E}[\cdot]$; $\mathcal{CN}\left(\mu,\sigma^2\right)$ denotes a complex circularly symmetric Gaussian random variable with mean $\mu$ and variance $\sigma^2$.

\section{System model} \label{system_model_section}
\newcommand{\bp}{\mathbf{p}}
\newcommand{\bx}{\mathbf{x}}
\newcommand{\bX}{\mathbf{X}}
\newcommand{\bP}{\mathbf{P}}
\newcommand{\bW}{\mathbf{W}}
\newcommand{\bw}{\mathbf{w}}
\newcommand{\bg}{\mathbf{g}}
\newcommand{\bh}{\mathbf{h}}
\newcommand{\bI}{\mathbf{I}}
\newcommand{\bT}{\mathbf{T}}
\newcommand{\by}{\mathbf{y}}
\newcommand{\bY}{\mathbf{Y}}
\newcommand{\bzero}{\mathbf{0}}

\begin{figure}
\begin{center}
\includegraphics[scale=1]{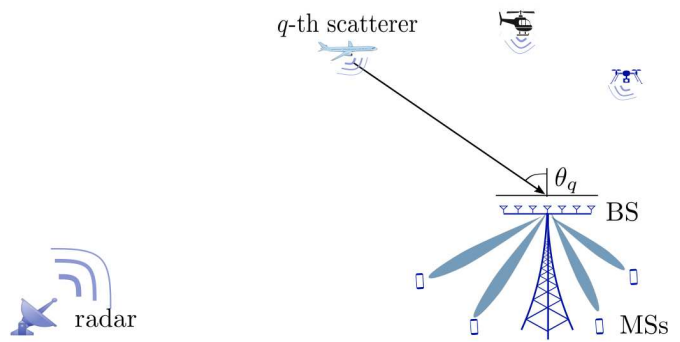}
\caption{A massive MIMO cellular system co-existing with a radar system. The BS received signal is corrupted by the clutter echoes. Ambient scatterers are seen as point-like targets placed at some random angles.}
\label{fig:scenario}
\end{center}
\end{figure}
Consider a single-cell massive MIMO communication system using SC-FDMA multiple access in the uplink, operating at a carrier frequency $f_c=3$ GHz and coexisting with a radar system using the same frequency band, as depicted in Fig. \ref{fig:scenario}.
With regard to the massive MIMO system,
we use the following notation and assumptions:
\begin{itemize}
\item[-] $N$ denotes the number of subcarriers of the SC-FDMA system ($N=4096$ will be assumed);
\item[-] The BS is equipped with a uniform linear array (ULA) with $M$ elements; fully digital beamforming is assumed, so that the number of RF chains coincides with the number of antennas.
\item[-] The mobile stations (MSs) transceivers are equipped with a single antenna, and the number of MSs in the system is $K$.
\item[-] The subcarrier spacing is denoted by $\Delta f$ ($\Delta f=30$ kHz is assumed).
\item[-] A block fading channel is assumed with channel coherence bandwidth equal to $C \Delta f$, with $C=16$. Otherwise stated, the channel can be considered constant over $C$ consecutive carriers and then takes a new value statistically independent of  the previous one. Note that, for each user, and for each BS receive antenna, CSI amounts to $Q=N/C=256$ complex scalar coefficients.
\item[-] The uplink channel between the $k$-th single-antenna MS and the BS on the $n$-th carrier is represented by the $M$-dimensional vector $\bh_k^{(\lceil n/C \rceil)}= \beta_k \bg_k^{(\lceil n/C \rceil)}$, where $\beta_k$ takes into account the path-loss and the log-normal shadowing (fully correlated across antennas and subcarriers), while $\bg_k$ denotes the small-scale fading and is a random vector with ${\cal CN}(0, \bI_M)$ distribution.
\item[-] The MSs transmit simultaneously using all the available subcarriers; user separation is performed in the spatial domain thanks to the use of a large number of antennas.
\item[-]
The uplink frame structure is depicted in Fig. \ref{fig:frame_structure}. Each packet is made of a cyclic-prefix (CP) and of a sequence of data symbols; the CP discrete length is $N_{\rm CP}=288$, while the length of the data symbols is $N$. The timing is such that $N_{\rm pkt}=14$ packets fit into a 0.5 ms timeslot, which leads to a symbol time $T_s= 8.146$ ns. These numbers are inspired by the December 2017 3GPP first release of the 5G New Radio standard.
%\item[-] $\tau_p$ denotes the length (in discrete samples) of the pilot sequences.
%\item[-] $\bp_k^{(1)}, \ldots, \bp_k^{(Q)}$ are the $Q$ pilots assigned to the $k$-th MS. They are $\tau_p$-dimensional column vectors.
%\item[-] The matrices $\bW_{X,{\rm FFT}}$ and $\bW_{X,{\rm IFFT}}$ are the $(X \times X)$-dimensional matrices describing the isometric $X$-point FFT and IFFT, respectively.
\end{itemize}
With regard to the radar system,
the following assumptions are made.
\begin{itemize}
\item[-]
The radar operates at the same carrier frequency as the wireless cellular system and it is assumed that there is full overlap between the bandwidths of the radar signal and of the communication signals transmitted by the MSs\footnote{This assumption is made to simplify the notation; the generalization of the results of this paper to the case of partial spectral overlap can be treated with standard techniques.}.
\item[-]
The radar transmits a coded waveform, of duration $LT_s$; its baseband equivalent is expressed as
\begin{equation}
s_R(t)=\ds \sqrt{P_T} \sum_{\ell=0}^{L-1} c_\ell \psi(t-\ell T_s) \; ,
\label{eq:radarsignal}
\end{equation}
wherein $P_T$ is the radar transmitted power, $[c_0, c_1, \ldots, c_{L-1}]$ is the unit-energy radar code, and $\psi(\cdot)$ is the base pulse; we assume that $\psi(\cdot)$ is a unit-energy rectangular pulse of duration $T_s$. The value $L=32$ is assumed in this paper.
\item[-] The waveform $s_R(t)$ is transmitted periodically every $T_{\rm PRT}=1$ ms, with $T_{\rm PRT}$ the PRT; this corresponds to assuming a maximum range of 150 Km, which is customary in surveillance systems, but our derivations carry over to the case of shorter-range systems.
\end{itemize}

\medskip

In the following, we provide a model for the uplink signal received at the BS, taking into account
both the data signals transmitted by the MSs and the contribution from the radar system due to the presence of scatterers in the surrounding environment. A block scheme of the generic MS transmitter is reported in the upper part of Fig. \ref{fig:transceivers}, while the lower part of the same figure represents a block scheme of the uplink receiver at the generic receive BS antenna.
As it is seen from the frame structure in Fig. \ref{fig:frame_structure}, 14 data packets fit into a 0.5 ms time window; some of these packets can be used to transmit known training symbols in order to enable channel estimation. In the following, we describe separately the signal model for the data packets in the training phase and in the data communication phase. In order to help the reader to keep up with the paper notation, we report in Table \ref{Symbols_table} the meaning of the
main mathematical symbols used in the following.

\begin{table}[]
\caption{Meaning of the main mathematical symbols}
\label{Symbols_table}
\begin{tabular}{|p{1.5cm}|p{6.5cm}|}
\hline
\textbf{Symbol} & \textbf{Interpretation} \\ \hline
$\mathbf{X}_k(\ell)^{(n)}$ & $n$-th coefficient of the isometric FFT of the symbol from the $k$-th user in the $\ell$-th packet \\ \hline
$\mathbf{h}_k^{(\lceil n/C \rceil)}$ &  uplink channel between the $k$-th single-antenna MS and the BS on the $n$-th carrier \\ \hline
$\mathbf{W}(\ell)^{(n)}$ & additive thermal noise on the $n$-th subcarrier in the $\ell$-th data packet \\ \hline
$ \mathbf{C}(\ell)^{(n)}$ & clutter contribution on the $n$-th subcarrier in the $\ell$-th data packet \\ \hline
$\mathcal{Y}_q,\mathcal{C}_q, \mathcal{W}_q$ & vectors containing data, clutter and noise, respectively, for the estimation of the $q$-th channel realization, with $q=\lceil \frac{n}{C}\rceil$ \\ \hline
$\mathbf{P}_k^{(q)}$ & vector containing the FFT coefficients of the pilot sequence for the $k$-th user devoted to the estimation of the $q$-th channel realization \\ \hline
$\mathbf{b}\left( \theta \right)$ & BS ULA array response for the generic angle $\theta$ \\ \hline
$\widehat{\mathbf{h}}_k^{(q)}$ &  estimate of the $q$-th realization of the uplink channel between the $k$-th single-antenna MS and the BS \\ \hline
$\mathbf{D}_{q,k}$ &  MMSE estimation matrix for the $q$-th realization of the uplink channel between the $k$-th single-antenna MS and the BS \\ \hline
$\mathbf{F}$ & permutation matrix such that $\text{vec} \left(\mathcal{C}_q^T \right)=\mathbf{F} \text{vec} \left(\mathcal{C}_q \right)$ \\ \hline
$\mathbf{K}_{\mathbf{C}(\ell)^{(n)}}$ & covariance matrix of the clutter vector $\mathbf{C}(\ell)^{(n)}$ \\ \hline
$\mathbf{v}_k^{(q)}$ &  generic combining vector for the symbol transmitted on the $n$-th subcarrier from the $k$-th user, with $q=\lceil \frac{n}{C}\rceil$  \\ \hline
\end{tabular}
\end{table}

\begin{figure}
\begin{center}
\includegraphics[scale=0.2]{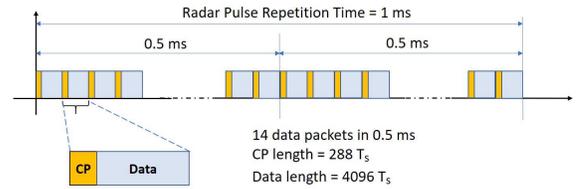}
\caption{Uplink frame structure. Data packets are made of a CP (of length 288 in discrete samples) and of information symbols (of length $N=4096$ in discrete samples). The symbol time is such that $N_{\rm pkt}=14$ data packets fit into 0.5 ms. The radar PRT is 1 ms.}
\label{fig:frame_structure}
\end{center}
\end{figure}

\begin{figure}
\begin{center}
\includegraphics[scale=0.25]{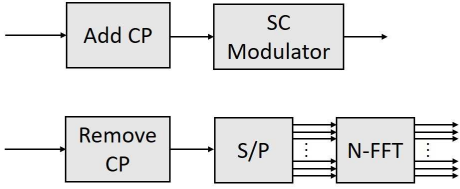}
\caption{Upper figure: Block-scheme of the transmitter at the generic mobile station. Lower figure: Block-scheme of the BS receiver at the generic antenna; assuming fully-digital beamforming at the BS, this scheme is to be replicated for each receive antenna.}
\label{fig:transceivers}
\end{center}
\end{figure}

\subsection{Signal model during uplink data transmission}
Consider the generic $\ell$-th data packet; denote by $\mathbf{x}_k(\ell)$ the $N$-dimensional vector containing data symbols from the $k$-th MS to be transmitted in the $\ell$-th data packet; denote by $\mathbf{X}_k(\ell)$ the $N$-dimensional vector representing the isometric FFT of $\mathbf{x}_k(\ell)$.
In the following, for the sake of simplicity, we will focus on the problem of detecting the FFT-ed symbols 
$\mathbf{X}_k(\ell)$ in place of the original symbols $\mathbf{x}_k(\ell)$. Given the isometric, orthogonality-preserving, relationship between $\mathbf{x}_k(\ell)$ and $\mathbf{X}_k(\ell)$, this assumption does not imply any loss of generality.
 Referring to the lower part of Fig. \ref{fig:transceivers}, it is easily shown that the observable corresponding to the $n$-th subcarrier after the FFT operation can be represented through the following $M$-dimensional vector:
\begin{equation}
\mathbf{y}(\ell)^{(n)}= \ds \sum_{k=1}^K
\sqrt{p_k} \mathbf{X}_k(\ell)^{(n)} \mathbf{h}_k^{(\lceil n/C \rceil)}+ \mathbf{W}(\ell)^{(n)} + \mathbf{C}(\ell)^{(n)} \; ,
\label{eq:vector_observable}
\end{equation}
for $n=1, \ldots, N$. In the above equation, $p_k$ is the power transmitted by the $k$-th MS, $\mathbf{X}_k(\ell)^{(n)}$ is the $n$-th entry of the vector $\mathbf{X}_k(\ell)$,  $\mathbf{W}(\ell)^{(n)}$ is a $\mathcal{CN}(\mathbf{0},
\sigma_w^2 \mathbf{I}_M)$ random vector representing the additive thermal noise, while
$\mathbf{C}(\ell)^{(n)}$ is the clutter contribution generated by the radar system on the $n$-th subcarrier; an expression for such vector will be given in the following.
Grouping together the data corresponding to the $N$ subcarriers we finally get the following $(M \times N)$-dimensional matrix for the observables corresponding to the $\ell$-th data packet:
\begin{equation}
\begin{array}{lll}
\mathbf{Y}(\ell)=\!\!\!\!& \ds \sum_{k=1}^K \sqrt{p_k}
\left( \left[\mathbf{h}_k^{(1)} \ldots \mathbf{h}_k^{(Q)}\right] \otimes \mathbf{1}_{1 \times C} \right) \mbox{diag}( \mathbf{X}_k(\ell))  \\ & +\mathbf{W}(\ell) + \mathbf{C}(\ell) \; .
\end{array}\label{eq:observable_data}
\end{equation}

\subsection{Signal model during uplink training}
Consider now the case in which the MSs transmit known pilot sequences to enable channel estimation at the BS. Let $T$ denote the number of consecutive packets devoted to training, and let $\mathbf{p}_k(1), \ldots, \mathbf{p}_k(T)$ denote $N$-dimensional vectors containing the $k$-th MS pilots to be used in the $T$ packets used for channel estimation.
Focusing on the $\ell$-th packet (with now $\ell=1, \ldots, T$), and following the same steps as in the previous section, it is easily shown that the observable at the output of the FFT block at the BS receiver can be written as the following $(M \times N)$-dimensional matrix
\begin{equation}
\begin{array}{llll}
\mathbf{Y}(\ell)= & \ds \sum_{k=1}^K \sqrt{p_{\rm{p},k}}
\left( \left[\mathbf{h}_k^{(1)} \ldots \mathbf{h}_k^{(Q)}\right] \otimes \mathbf{1}_{1 \times C} \right)  \\ & \mbox{diag}( \mathbf{W}_{N, FFT}\mathbf{p}_k(\ell))+ \mathbf{W}(\ell) + \mathbf{C}(\ell) \; ,\end{array}
\label{eq:observable_estimation}
\end{equation}
where, now, $p_{\rm{p},k}$ is the power transmitted by the $k-$th user during the uplink training phase, and $\mathbf{W}_{N, FFT}$ is the $(N \times N)$-dimensional matrix performing an isometric FFT\footnote{The $(m,n)$-th entry of $\mathbf{W}_{N, FFT}$ is thus $\frac{1}{\sqrt{N}} e^{-j 2 \pi (m-1)(n-1)/N}$.}.
Assume now that the $M$-dimensional channel vectors $\mathbf{h}_k^{(q)}$, $\forall k=0, \ldots, K-1$, have to be estimated; to this end, the columns from the $[(q-1)C+1]$-th to the $[qC]$-th of the matrices $\mathbf{Y}(1), \ldots, \mathbf{Y}(T)$ must be processed; they form the following observable:
\begin{equation}
\mathcal{Y}_q =  \ds \sum_{k=1}^K\sqrt{p_{\rm{p},k}}\mathbf{h}_k^{(q)} \mathbf{P}_k^{(q) \, T}
+ \mathcal{W}_q + \mathcal{C}_q \; ,
\label{eq:observable_estimation_q}
\end{equation}
where
$$
\mathcal{W}_q=\left[ \mathbf{W}(1)_{:,(q-1)C+1:qC} \cdots \mathbf{W}(T)_{:,(q-1)C+1:qC}\right]  \; ,
$$
$$\mathcal{C}_q=\left[ \mathbf{C}(1)_{:,(q-1)C+1:qC} \cdots \mathbf{C}(T)_{:,(q-1)C+1:qC}\right] \; , $$
and $\mathbf{P}_k^{(q)}$ is a $(TC)$-dimensional vector defined as follows:
\begin{equation}
\begin{array}{lll}
\mathbf{P}_k^{(q)} \triangleq &
\left[ \left(\mathbf{W}_{N, FFT}\mathbf{p}_k(1)\right)_{(q-1)C+1:qC}, \right. \\ &
\left. \ldots ,  \left(\mathbf{W}_{N, FFT}\mathbf{p}_k(T)\right)_{(q-1)C+1:qC}\right]^T \; . \label{eq:vector_pilot}
\end{array}
\end{equation}

\subsection{Clutter modeling}
\newcommand{\bR}{\mathbf{R}}\newcommand{\br}{\mathbf{r}}
We now illustrate the clutter model and provide an explicit expression for the $(M \times N)$-dimensional clutter matrix $\mathbf{C}(\ell)$ affecting the $\ell$-th received data packet.

The clutter disturbance is actually generated by a large set of discrete scatterers in the surrounding environment. Given the BS array dimension, these scatterers are seen by the BS as "co-located" \cite{JianLi}, namely all of the antennas see the scatterer under the same aspect angle and with the same (complex) scattering coefficient\footnote{In fact, two antennas spaced a distance $d$ apart and tuned to a wavelength $\lambda$ see a target/scatterer located at distance $R$ and having an extension $V$ in the antenna alignment direction under the same aspect angle iff $d<\frac{\lambda R}{V}$, which is for sure true in the scenario considered here, where the antenna spacing is in the order of centimeters, the scatterers are point-like, and their distance from the BS may be in the order of kilometers.}. The radar-to-BS channel can be henceforth modeled as an LTI system with the following vector-valued impulse response:
\begin{equation}
\mathbf{h}(t)= \ds \sum_{q=0}^{N_s-1} \sum_{m=0}^{Q-1} \beta_{q,m} \mathbf{b}(\theta_q) \delta(t - \tau_q - m/W) \; .
\label{eq:radar-to-BS}
\end{equation}
In the above equation, $N_s$ denotes the number of scatterers in the surrounding environment that contribute to the clutter disturbance; $\theta_q$ is the direction of arrival of the clutter contribution from the $q$-th scatterer, $\tau_q$ is the propagation delay associated to the signal generated by the $q$-th scatterer and the BS ULA array response for the generic angle $\theta$ is defined as
\begin{equation}
\mathbf{b}(\theta)=[1 \; e^{-j 2 \pi d \sin(\theta)/\lambda} \; \ldots e^{-j (M-1) 2 \pi d \sin(\theta)/\lambda }]^T \, . 
\label{BS_ULA}
\end{equation}
Moreover, since the signal bandwidth $W$ exceeds the channel coherence time, we also assume that each physical scatterer generates $Q$ clutter echoes spaced integer multiples of $1/W$ apart; accordingly, $\beta_{q,m}$ is the reflection coefficient associated to the $m$-th replica from the $q$-th scatterer.

Now, recall that the radar transmits the waveform in \eqref{eq:radarsignal}; this waveform travels through a channel with the impulse response  $\mathbf{h}(t)$  in \eqref{eq:radar-to-BS}  and then is passed through a filter with a rectangular impulse response of duration $T_s$ and sampled at rate $1/T_s$. After A/D conversion, the baseband equivalent of the clutter disturbance can be represented as the following vector-valued discrete-time (sampled at rate $1/T_s$) signal:
\begin{eqnarray}
\widetilde{\mathbf{s}}_R(\eta)=\ds
\sum_{q=0}^{N_s-1} \sum_{m=0}^{Q-1} \sum_{p=0}^{L-1}
\sqrt{P_T}\beta_{q,m}c_p \mathbf{b}(\theta_q) \nonumber \\
r_{\psi} ((\eta-p)T_s-m/W -\tau_q) \; ,
\label{eq:discrete-time-clutter}
\end{eqnarray}
with $r_{\psi}(\cdot)$ the autocorrelation function of the base pulse.

Now, refer to the frame structure of Fig. \ref{fig:frame_structure} and assume, for simplicity, that the radar transmits its signal at the beginning of a 0.5 ms timeframe\footnote{This assumption can be removed with standard techniques.}.  Denoting by $T_{\rm pkt}=(4096+288)T_s$ the duration of a data packet including its CP, the generic $\ell$-th packet starts at time $\ell T_{\rm pkt} + T_{\rm CP}$ and ends at $(\ell+1) T_{\rm pkt}$.
Let now ${\cal S}(\ell)$ denote the set of the scatterers corrupting the reception of the $\ell$-th data packet. Since the contribution from the generic $q$-th scatterer starts at $\tau_q$ and stops at $\tau_q+QT_s+LT_s$, it is easily seen that the set ${\cal S}(\ell)$ can be defined as
\begin{equation}
\begin{array}{lll}
{\cal S}(\ell)= \left\{ q \in \{0, 1, \ldots, N_s-1\} \; : \; \right. \\  \left.
[\tau_q, \tau_q+QT_s+LT_s] \, \cap \,[\ell T_{\rm pkt} + T_{\rm CP}, (\ell+1) T_{\rm pkt}] \neq \emptyset \right\}\; ,
\end{array}
\end{equation}
with $\emptyset$ denoting the empty set. Using the above notation, the clutter $(M \times N)$-dimensional matrix appearing in Eqs. \eqref{eq:observable_data} and \eqref{eq:observable_estimation} can be expressed as
\begin{equation}
\mathbf{C}(\ell)=\ds \sum_{q \in {\cal S}(\ell)}
\sum_{m=0}^{Q-1} \sum_{p=0}^{L-1}\sqrt{P_T}\beta_{q,m} c_p \mathbf{b}(\theta_q) \mathbf{r}^T_{q,p,m}(\ell) \mathbf{W}_{N, FFT} \; ,
\label{clutter_matrix}
\end{equation}
wherein
\begin{equation}
\begin{array}{lll}
\mathbf{r}_{q,p,m}(\ell)= & \left[ r_{\psi}\left({\ell  T_{\rm pkt} + T_{\rm CP }} +T_s-pT_s -\frac{m}{W} - \tau_q \right) \, ,  \right. \\ & \left.  \ldots, \,
r_{\psi}\left({(\ell +1) T_{\rm pkt} } -pT_s -\frac{m}{W} - \tau_q \right)
\right]^T \; .
\end{array}
\end{equation}
For future reference, we define the $N-$dimensional row vector
\begin{equation}
{\widetilde{\bR}}_{q,\ell,m}^T=\sum_{p=0}^{L-1} c_p\mathbf{r}^T_{q,p,m}(\ell) \mathbf{W}_{N, FFT}\; .
\label{R_tilde_definition}
\end{equation}

\section{Information-theoretic analysis}
In this section we show that, assuming perfect CSI, 
the radar clutter contribution to the mutual information between the observable and the user information symbols vanishes in the mean-square sense in the massive MIMO regime, i.e. for $M \rightarrow +\infty$.
Even though the result could be shown in the general case, we particularize our analysis to the simple case of single-user transmission and focus the attention on a single packet.
The latter assumption is legitimate since the presence of a cyclic prefix guarantees that no scattering center can affect consecutive OFDM symbols. 
In order to simplify the notation, we omit the user and the packet indexes. Under these circumstances, let us consider $C$ consecutive sub-carriers, extending from $(n-1)C+1$ to $nC$, which experience the same channel fading. Thus, \eqref{eq:observable_data} simplifies to:
\begin{equation}
\begin{array}{llll}
\bm y_{(n-1)C+1:nC}=& \sqrt{p} \bm X^{(n-1)}\otimes \bm h^{(n-1)}+\bm{\widetilde{w}}_{(n-1)C+1:nC} \\ &\ds + \bm{\widetilde{c}}_{(n-1)C+1:nC} \in \mathbb{C}^{CM}\, , \\ \ds \bm X^{(n-1)} \in \mathbb{C}^C\, , & \bm h^{(n-1)} \in \mathbb{C}^M
\end{array}
\end{equation}
where $\bm X^{(n-1)}=\left[X^{(n-1)C}, \ldots, X^{nC-1}\right]^T$, $\bm y=\text{vec} \left( \bm Y\right)$, $\bm{\widetilde{w}}=\text{vec}\left( \bm W\right)$, $\bm{\widetilde{c}}=\text{vec}\left( \bm C\right)$,  
and $\bm a_{i:j}$ denotes the entries of the vector $\bm a$ from the $i$-th to the $j$-th one.
We can thus form the $QCM=NM$-dimensional vector
\begin{equation}
\bm y=\left[\bm y^T_{1:C}, \bm y^T_{C+1:2C}, \ldots , \bm y^T_{(Q-1)C+1:QC}\right]^T
\end{equation}
In order to study the clutter effect on the massive MIMO system, we start evaluating the clutter covariance matrix of the whole clutter-related observable $\bm{\widetilde{c}}$. To this end,  we assume to have $\widetilde{N}_s$ scatterers, each contributing $Q$ replicas of the radar signal in the packet under test. Under these circumstances, and using the definitions in \eqref{clutter_matrix} and \eqref{R_tilde_definition}, the clutter matrix in the packet under test can be written as
\begin{equation}
\mathbf{C}=\ds \sum_{q =1}^{\widetilde{N}_s}
\sum_{m=0}^{Q-1} \sqrt{P_T}\beta_{q,m}  \mathbf{b}(\theta_q) \mathbf{\widetilde{R}}^T_{q,m} \; ,
\label{clutter_matrix_generic}
\end{equation}
which leads to the following expression for the 
 $MN$-dimensional vector $\bm{\widetilde{c}}$:
\begin{equation}
\bm{\widetilde{c}}=\ds \sum_{q =1}^{\widetilde{N}_s}
\sum_{m=0}^{Q-1} \sqrt{P_T}\beta_{q,m}  \mathbf{\widetilde{R}}_{q,m} \otimes \mathbf{b}(\theta_q) \; .
\label{clutter_vector_generic}
\end{equation}
Accordingly, the $(MN \times MN)$-dimensional clutter covariance matrix from $\widetilde{N}_s$ scatterers whose replicas are {\em all} contained in a given packet can be written as
\begin{equation}
\bm K_{\rm c}= \mathbb{E}\left[ \bm{\widetilde{c}} \bm{\widetilde{c}}^H\right]=\sum_{q=1}^{\widetilde{N}_s}\sum_{m=0}^{Q-1} P_T \sigma^2_{q,m} \mathbf{\widetilde{A}}_{q,m} \otimes \bm b(\theta_q) \bm b^H(\theta_q) \, , 
\label{eq:clutter_covariance}
\end{equation}
where the above result is obtained exploiting the properties of the Kronecker product and defining the $(N \times N)-$dimensional matrix $\mathbf{\widetilde{A}}_{q,m}= \mathbf{\widetilde{R}}_{q,m} \mathbf{\widetilde{R}}_{q,m}^H$.
Consider now the uplink data transmission phase; the single-user mutual information has the general form\cite{Cover-Thomas}
\begin{equation}
\begin{array}{llll}
&I \left(\left.\bm y;\bm X^{(0)}, \bm X^{(1)}, \ldots , \bm X^{(Q-1)}\right|\bm h^{(0)}, \ldots, \bm h^{(Q-1)} \right)= \\ &\log \det \left(\sigma^2_w \bm I_{NM}+\bm K_{\rm c}+\bm K'\right)-\log \det
\left(\sigma^2_w \bm I_{NM}+\bm K_{\rm c}\right)=\\
&\log \det \left[\bm I_{NM}+\left(\sigma^2_w \bm I_{NM}+\bm K_{\rm c} \right)^{-1}\bm K' \right] \, ,
\end{array}
\label{SU_mutual_inf}
\end{equation}
where $\bm K'$ is expressed in \eqref{eq:K'} on top of the next page.
\begin{figure*}
\begin{equation}
\bm K'=  p
\left(
\begin{array}{lll}
\mathbb{E} \left[\bm X^{(0)} \bm X^{(0), \, H} \right]\otimes \bm h^{(0)} \bm h^{(0), \, H} & \ldots & \mathbb{E}\left[\bm X^{(0)} \bm X^{(Q-1), \, H}\right] \otimes \bm h^{(0)} \bm h^{(Q-1), \, H} \\
\cdots & \cdots & \cdots \\
\mathbb{E} \left[\bm X^{(Q-1)} \bm X^{(0), \, H}\right] \otimes   \bm h^{(Q-1)} \bm h^{(0), \, H}  & \ldots & \mathbb{E}\left[ \bm X^{(Q-1)} \bm X^{(Q-1), \, H}\right] \otimes \bm h^{(Q-1)} \bm h^{(Q-1), \, H}
\end{array}
\right)
\label{eq:K'}
\end{equation}
\hrulefill
\end{figure*}
Given the mutual information in \eqref{SU_mutual_inf}, we can prove the following result.
\begin{theorem}
The following relation holds, with convergence in the mean square sense:
\begin{equation}
\begin{array}{llll}
\ds \lim_{M  \rightarrow \infty} \log \det \left[\bm I_{NM}+\left(\sigma^2_w \bm I_{NM}+\bm K_{\rm c} \right)^{-1}\bm K' \right] =  \\ \ds \lim_{M  \rightarrow \infty} \log \det \left[\bm I_{NM}+\frac{1}{\sigma^2_w}\bm K' \right]\;.
\end{array}
\label{eq:main_result}
\end{equation}
Otherwise stated, the clutter effect on the single user mutual information in Eq. \eqref{SU_mutual_inf} vanishes in the mean square sense in the limit $M \rightarrow \infty$.
\end{theorem}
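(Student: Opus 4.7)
The proof strategy rests on two complementary low-rank structures. First, $\bm K_{\rm c}$ has rank at most $\widetilde{N}_s Q$ (independent of $M$), being a sum of rank-one Kronecker products $\mathbf{\widetilde{A}}_{q,m}\otimes\bm b(\theta_q)\bm b^H(\theta_q)$; second, $\bm K'$ has rank at most $QC=N$, since each of its blocks contains the rank-one factor $\bm h^{(i)}\bm h^{(j),H}$. Both covariances are thus thin compared to the ambient dimension $NM$ in the limit $M\to\infty$.

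The first move is to apply the determinantal identity $\log\det(\bm I+\bm A^{-1}\bm C)+\log\det\bm A=\log\det(\bm A+\bm C)$ twice in order to swap the roles of the two covariances, rewriting the quantity whose vanishing is sought as
$$\Delta_M=\log\det\bigl[\bm I+(\sigma_w^2\bm I+\bm K')^{-1}\bm K_{\rm c}\bigr]-\log\det\bigl[\bm I+\tfrac{1}{\sigma_w^2}\bm K_{\rm c}\bigr].$$
Writing the low-rank factorization $\bm K_{\rm c}=\bm U\bm \Lambda\bm U^H$ with $\bm U\in\mathbb{C}^{NM\times r_c}$ and $r_c\le\widetilde{N}_sQ$, Sylvester's determinant theorem collapses both log-determinants onto a space of dimension $r_c\times r_c$, which is $M$-independent.

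Next I would expand $(\sigma_w^2\bm I+\bm K')^{-1}=\tfrac{1}{\sigma_w^2}\bm I-\tfrac{1}{\sigma_w^2}\bm K'(\sigma_w^2\bm I+\bm K')^{-1}$, splitting the first log-det in $\Delta_M$ into the noise-only term and a correction $\bm \Xi=\tfrac{1}{\sigma_w^2}\bm \Lambda^{1/2}\bm U^H\bm K'(\sigma_w^2\bm I+\bm K')^{-1}\bm U\bm \Lambda^{1/2}$. Since $\bm K'(\sigma_w^2\bm I+\bm K')^{-1}$ is a soft projection onto $\operatorname{range}(\bm K')$ and this range lies in the span of $NM$-dimensional vectors whose $M$-dimensional spatial components are the $\bm h^{(i)}$, the entries of $\bm \Xi$ are governed by inner products of the form $\bm b^H(\theta_q)\bm h^{(i)}$. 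Favorable propagation then gives $\mathbb{E}\bigl[|\bm b^H(\theta_q)\bm h^{(i)}|^2\bigr]=\beta^2 M$ while $\|\bm b(\theta_q)\|^2=M$ and $\|\bm h^{(i)}\|^2$ concentrates around $\beta^2 M$; combined with the extra $1/M$ attenuation supplied by the resolvent on $\operatorname{range}(\bm K')$ (whose nonzero eigenvalues are of order $M$), this yields $\mathbb{E}[\|\bm \Xi\|^2]=O(1)$, whereas the noise-only main matrix $\tfrac{1}{\sigma_w^2}\bm \Lambda^{1/2}\bm U^H\bm U\bm \Lambda^{1/2}$ has diagonal entries of order $M$.

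Factoring out the main term writes $\Delta_M=\log\det[\bm I_{r_c}-\bm \Theta_M]$ with $\mathbb{E}[\|\bm \Theta_M\|^2]=O(1/M^2)$, and since $r_c$ is fixed, continuity of $\log\det$ immediately yields $\mathbb{E}[|\Delta_M|^2]\to 0$, which proves the claim. The principal technical obstacle is the mean-square control of the random resolvent $(\sigma_w^2\bm I+\bm K')^{-1}$: because $\bm K'$ is itself built from the Gaussian channel vectors $\bm h^{(i)}$, establishing that its nonzero eigenvalues concentrate on the scale $\Theta(M)$ requires mild nondegeneracy of the symbol covariances $\mathbb{E}[\bm X^{(i)}\bm X^{(j),H}]$ together with a standard Gaussian concentration inequality for the channel Gram matrix.
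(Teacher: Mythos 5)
Your proof is correct in its essentials, but it takes the dual route to the paper's. The paper writes $\bm K'=\widetilde{\bm H}\widetilde{\bm H}^H$ with $\widetilde{\bm H}$ of size $NM\times N$, collapses the left-hand side onto the $N$-dimensional signal subspace, and then applies the matrix inversion lemma to the \emph{clutter} resolvent: its correction term $\bm D_M=\frac{1}{M\sigma^2_w}\widetilde{\bm H}^H\bm U_c\bm \Lambda'_c\bm U_c^H\widetilde{\bm H}$ involves only the deterministic, uniformly bounded weights $\lambda_{c,\ell}/(\sigma^2_w+\lambda_{c,\ell})$, and vanishes because the unit-norm clutter eigenvectors satisfy $\mathbb{E}\bigl[|\bm u_{\ell}^{(i),H}\bm h^{(q)}|^2\bigr]\le \beta$ while the diagonal of $\frac{1}{M}\widetilde{\bm H}^H\widetilde{\bm H}$ stays at $p\beta$. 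You instead subtract the two sides, swap the roles of the covariances, and collapse onto the clutter subspace of fixed rank $r_c\le \widetilde{N}_sQ$, showing that the signal-induced perturbation $\bm \Xi$ of the clutter spectrum is $O(1)$ against main-diagonal entries of order $M$; both arguments ultimately rest on the same asymptotic orthogonality between the steering vectors $\bm b(\theta_q)$ and the channel vectors $\bm h^{(i)}$. What your version buys is a smaller, $M$-independent working dimension ($r_c$ rather than $N=4096$) and a cleaner handling of the fact that both sides of \eqref{eq:main_result} individually diverge like $N\log M$, since you work directly with the difference. What it costs is the need to control the \emph{random} resolvent $(\sigma^2_w\bm I+\bm K')^{-1}$, i.e., to show that $\frac{1}{M}\widetilde{\bm H}^H\widetilde{\bm H}$ concentrates around $p\beta\,\bm I_N$; you correctly flag this as the main technical obstacle, and it is the same law-of-large-numbers step the paper needs for its leading term, so nothing essentially new is required. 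Two small cautions: the claimed rate $\mathbb{E}[\|\bm \Theta_M\|^2]=O(1/M^2)$ is only obtained from the symmetrized form $\bm A^{-1/2}\bm \Xi\bm A^{-1/2}$ with $\bm A=\bm I_{r_c}+\frac{1}{\sigma^2_w}\bm \Lambda$, so that $\|\bm A^{-1/2}\bm \Lambda^{1/2}\|\le \sigma_w$ absorbs both $\bm \Lambda^{1/2}$ factors; and passing from $\bm \Theta_M\rightarrow \bm 0$ to mean-square convergence of $\log\det(\bm I_{r_c}-\bm \Theta_M)$ requires keeping the argument away from singularity (the paper's own appeal to the continuous mapping theorem is no more careful on this point).
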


\begin{proof}
To begin with, let us evaluate, for finite $M$,  the right-hand-side (RHS) of \eqref{eq:main_result},
which represents the single-user mutual information in \eqref{SU_mutual_inf} in the absence of radar clutter disturbance. 
Under the hypothesis that the channel vectors $\bm h^{(0)}, \ldots, \bm h^{(Q-1)}$ are linearly independent\footnote{This assumption is fulfilled with probability 1 for $M>Q$.}, 
and assuming independent information symbols\footnote{This assumption is not strictly necessary for the theorem to hold, and is made here to make the proof simpler.}, so that 
$\mathbb{E} \left[\bm X^{(i)} \bm X^{(j), \, H} \right]= \bm I_C \delta_{i,j}$, 
the matrix $\bm K'$, reported in \eqref{eq:K'}, has $CQ=N$ non-zero eigenvalues. In particular,
there are only $Q$ (possibly) distinct eigenvalues, each with multiplicity $C$. The following relation thus holds:
\begin{equation}
\log \det \left[\bm I_{NM}+\frac{1}{\sigma^2_w}\bm K' \right]=\log \det \left[\bm I_{N}+\frac{1}{\sigma^2_w} \bm \Gamma'_N (M) \right] \; , \label{eq:MI}
\end{equation}
with $\bm \Gamma'_N(M) = \text{diag}\left[\lambda_1 (\bm K'), \ldots , \lambda_N (\bm K')\right]$ a diagonal matrix 
containing the $N$ non-zero eigenvalues of $\bm K'$. 
% As a consequence, if
%\begin{equation}
%		\bm K'= 
%	\left(
%	\begin{array}{lll}
%		P_1\bm I_C\otimes \bm h^{(0)} \bm h^{(0), \, H} & \ldots & \bm 0 \\
%		\cdots & \cdots & \cdots \\
%\bm 0 & \ldots & P_{Q-1}\bm I_C \otimes \bm h^{(Q-1)} \bm h^{(Q-1), \, H}
%	\end{array}
%	\right).
%	\label{eq:K'-diag}
%\end{equation}
The RHS of \eqref{eq:main_result}, for finite $M$,  can be thus expressed as
\begin{equation}
\sum_{i=1}^N\log  \left[\left(1+\frac{1}{\sigma^2_w} \lambda_i(\bm K')\right) \right]\; .
\label{eq:MI-1}
\end{equation}
For large $M$, the channel vectors $\bm h^{(0)}, \ldots, \bm h^{(Q-1)}$ become orthogonal and the eigenvalues 
of $\bm K'$ converge to 
\[
\left[\underbrace{\|{\bm h^{(0)}}\|^2, \ldots, \|{\bm h^{(0)}}\|^2}_{C \; \text{times}}, \ldots,
\underbrace{\|{\bm h^{(Q-1)}\|^2}, \ldots, {\|\bm h^{(Q-1)}\|^2}}_{C \; \text{times}}
\right] \; ,
\]
and we have
\begin{equation}
\begin{array}{lll}
\ds \lim_{M  \rightarrow \infty} \log \det \left[\bm I_{NM}+\frac{1}{\sigma^2_w}\bm K' \right]= \\ 
\ds \lim_{M  \rightarrow \infty} C\sum_{i=0}^{Q-1}\log  \left[1+\frac{p\parallel \bm h^{(i)}\parallel^2}{\sigma^2_w}\right]  \xrightarrow{\text{large}\; M} \\
CQ \ds \lim_{M  \rightarrow \infty} \ds \log  \left[1+M\frac{p \beta}{\sigma^2_w}\right]
\; ,
\end{array}
\label{eq:limit001}
\end{equation}
with $\beta$ the path-loss and shadowing coefficient. Given the fact that $\mathbb{E} \left[\bm X^{(i)} \bm X^{(j), \, H} \right]= \bm I_C \delta_{i,j}$,  the matrix $\bm K'$ in \eqref{eq:K'} can be expressed as $\bm K'= \widetilde{\bm H} \widetilde{\bm H}^H$, where $\widetilde{\bm H}$ is the following
$(NM \times N)$-dimensional matrix
\begin{equation}
\widetilde{\bm H}=\sqrt{p}
 \left[ \begin{array}{llll} \bm I_C \otimes\bm h^{(0)} & \bm 0_{MC\times C} &\ldots & \bm 0_{MC\times C} \\
\bm 0_{MC\times C} &  \bm I_C \otimes \bm h^{(1)} & \ldots & \bm 0_{MC\times C}\\
\cdots & \cdots & \cdots & \cdots \\
\bm 0_{MC\times C} & \bm 0_{MC\times C}& \ldots &  \bm I_C \otimes \bm h^{(Q-1)}
\end{array}
\right]\; .
\end{equation}
It is trivial to show that substituting the relation $\bm K'= \widetilde{\bm H} \widetilde{\bm H}^H$ into the RHS of  \eqref{eq:main_result} and letting $M$ diverge we obtain the same result as the one reported in \eqref{eq:limit001}.

Consider now the left-hand-side (LHS) of \eqref{eq:main_result}. 
This term, for finite $M$, can be written as
\[
 \log \det \left[\bm I_{N}+\underbrace{\widetilde{\bm H}^H\left(\sigma^2_w \bm I_{NM}+\bm K_{\rm c} \right)^{-1}\widetilde{\bm H}}_{\bm G} \right]\; ,
\]
where $\bm G$ is $(N \times N)$-dimensional. Letting $\bm K_c=\bm U_c \bm \Lambda_c \bm U_c^H $, with 
$\bm U_c \in \mathbb{C}^{NM \times L}$, $\bm \Lambda_c=\text{diag}(\lambda_{c,1}, \ldots , \lambda_{c,L})$, $L$ being the rank of $\bm K_c$, and $\bm U_c^H \bm U_c=\bm I_L$, applying the 
matrix inversion lemma we have, for $\bm G$:
%\[
%\bm G=\widetilde{\bm H}^H\left[\frac{1}{\sigma^2_w}\bm I_{NM}-\frac{1}{\sigma^2_w}\bm I_{NM}\bm U_c\left(\bm \Lambda_c^{-1}+\frac{1}{\sigma^2_w}\underbrace{\bm U_c^H\bm U_c}_{\bm I_L}\right)^{-1}\frac{1}{\sigma^2_w}\bm U_c^H\right]\widetilde{\bm H}
%\]
%namely:
\[
\begin{array}{lll}
\bm G=& \ds \frac{1}{\sigma^2_w}\widetilde{\bm H}^H \widetilde{\bm H}- \\ &
\ds \frac{1}{\sigma^2_w}\widetilde{\bm H}^H\bm U_c \text{diag}\left(\frac{\lambda_{c,1}}{\sigma^2_w+\lambda_{c,1}},\ldots , \frac{\lambda_{c,L}}{\sigma^2_w+\lambda_{c,L}} \right)\bm U_c^H \widetilde{\bm H}\; .
\end{array}
\]
Now we notice that
\[\begin{array}{lll}
\bm G=&M\left[\ds \frac{1}{M\sigma^2_w}\widetilde{\bm H}^H \widetilde{\bm H} - \right.\\ &
\left.\underbrace{\frac{1}{M \sigma^2_w}\widetilde{\bm H}^H\bm U_c \text{diag}\left(\frac{\lambda_{c,1}}{\sigma^2_w+\lambda_{c,1}},\ldots , \frac{\lambda_{c,L}}{\sigma^2_w+\lambda_{c,L}} \right)\bm U_c^H \widetilde{\bm H}}_{\bm D_M}\right]\; .
\end{array}\]
%Assuming that $h_{i,k}$ are complex circularly symmetric independent complex random variables with unit mean square value, we have:
%\[
%\lim_{M \rightarrow \infty} \frac{1}{M\sigma^2_w}\widetilde{\bm H}^H \widetilde{\bm H} =\frac{1}{\sigma^2_w} \text{diag}\left(\frac{P_0\parallel \bm h^{(0)}\parallel^2}{M}, \ldots , \frac{P_{Q-1}\parallel \bm h^{(Q-1)}\parallel^2}{M}\right) \rightarrow \frac{1}{\sigma^2_w} \text{diag}\left(P_0, \ldots , P_{Q-1}\right)
%\]
To complete the proof, we need to show that the (non-negative definite) matrix sequence $\bm D_M$, for large $M$, becomes small with respect to  $\frac{1}{M\sigma^2_w}\widetilde{\bm H}^H \widetilde{\bm H}$, which converges to
$\frac{p\beta}{\sigma^2_w} {\bm I}_{N}$. To show this, we will prove that     
\[
\lim_{M \rightarrow \infty} \bm D_M= \bm 0_{N \times N} \; ,
\]
with convergence in the mean square sense. 
To this end, we will show the following:
\begin{equation}
\lim_{M \rightarrow \infty} \mathbb{E} \left[ \text{tr}(\bm D_M)\right]=0\; ,  \quad \lim_{M \rightarrow \infty} \mathbb{E} \left[ \text{tr}^2(\bm D_M)\right]=0 \; .
\end{equation}
First of all,  consider that:
\[
\mathbb{E} \left[ \text{tr}(\bm D_M)\right]=\text{tr} \left[ \frac{1}{M\sigma^2_w}\bm U_c \bm \Lambda'_c \bm U_c^H \mathbb{E}\left[\widetilde{\bm H} \widetilde{\bm H}^H \right]\right]\; ,
\]
where $\bm \Lambda'_c=\text{diag}\left(\frac{\lambda_{c,1}}{\sigma^2_w+\lambda_{c,1}},\ldots , \frac{\lambda_{c,L}}{\sigma^2_w+\lambda_{c,L}} \right)$. Since
$\mathbb{E}\left[\widetilde{\bm H} \widetilde{\bm H}^H \right]=p \beta \bm I_{MN}$, we have
\[
\mathbb{E} \left[ \text{tr}(\bm D_M)\right]=\text{tr} \left[ \frac{p \beta}{M\sigma^2_w}\bm U_c \bm \Lambda'_c \bm U_c^H \right] \xrightarrow{M \rightarrow \infty}  0\, ,
\]
since $\text{tr}(\bm \Lambda'_c ) \leq L$. 
Moreover
\[
\begin{array}{lll}
\sigma^2_w\text{tr}^2[\bm D_M]= \text{tr}^2\left[\frac{1}{M}\widetilde{\bm H}^H \bm U_c \bm \Lambda'_c \bm U_c^H \widetilde{\bm H} \right]\leq \\
\text{tr}^2\left[\frac{1}{M}\widetilde{\bm H}^H \bm U_c \bm U_c^H \widetilde{\bm H} \right]\; ,
\end{array} \]
and
\[
\text{tr}^2\!\left[\!\frac{1}{M}\widetilde{\bm H}^H \bm U_c \bm U_c^H \widetilde{\bm H} \!\right]\!=\!\text{tr}^2\!\!\left[\underbrace{\frac{1}{\sqrt{M}}\widetilde{\bm H}^H \bm U_c}_{\bm A_M^H}
\underbrace{\frac{1}{\sqrt{M}} \bm U_c^H \widetilde{\bm H}}_{\bm A_M}\! \right]\, .
\]
%Consider the $(L \times N)$-dimensional matrix $\bm A_M$
%%\[
%%\bm A _M= \frac{1}{\sqrt{M}}\bm U_c^H \widetilde{\bm H}=\frac{1}{\sqrt{M}}\left( \begin{array}{c}
%%\bm u_{c,1}^H \\
%%\cdots \\
%%\bm u_{c,L}^H\end{array}\right)\left( \begin{array}{llll} \sqrt{P_0}\bm h^{(0)} & \bm 0_M &\ldots & \bm 0_M \\
%%\bm 0_M & \sqrt{P_1} \bm h^{(1)} & \ldots & \bm 0_M\\
%%\cdots & \cdots & \cdots & \cdots \\
%%\bm 0_M & \bm 0_M & \ldots & \sqrt{P_{Q-1}} \bm h^{(Q-1)}
%%\end{array}
%%\right)
%%\]
%and focus on the first generic row, which is written as
%\[
%\frac{\sqrt{p}}{\sqrt{M}} \left[  \bm u_{c,1}^{(0),H}\bm h^{(0)}, \ldots ,  \bm u_{c,1}^{(Q-1),H}\bm h^{(Q-1)}\right]\; ,
%\]
%where of course $\bm u_{c,1}^{(i)} \in \mathbb{C}^{M \times 1}$ and $\ds\sum_{i=0}^{Q-1}\parallel \bm u_{c,1}^{(i)} \parallel^2=1$.

Consider the $(L \times N)$-dimensional matrix $\bm A_M$, its generic $(\ell,cq)$-th entry can be written as
\begin{equation}
\left[\bm A_M\right]_{(\ell,cq)}=\frac{\sqrt{p}}{\sqrt{M}} \bm u_{\ell}^{(qC-C+c),H}\bm h^{(q-1)}\; ,
\end{equation}
$c=1,\ldots,C \; , q=1,\ldots,Q \; , \ell=1,\ldots, L$, where $\bm u_{\ell}^{(i)}$ is the $i$-th $M$-dimensional block of the $\ell$-th column of the matrix $\bm U_c $ and
 \begin{equation}
\ds\sum_{i=1}^{N}\parallel \bm u_{\ell}^{(i)} \parallel^2=1 \; , \ell=1,\ldots, L.
\end{equation}

\noindent
Notice that
\begin{equation}
\frac{\sqrt{p}}{\sqrt{M}} \mathbb{E} \left[ \bm u_{\ell}^{(qC-C+c),H}\bm h^{(q-1)}\right]=0\; ,
\end{equation}
while its mean square value is
\begin{equation}
\begin{array}{lll}
&\ds\frac{p}{M} \mathbb{E} \left[  \bm u_{\ell}^{(qC-C+c),H}\bm h^{(q-1)}\bm h^{(q-1),H} \bm u_{\ell}^{(qC-C+c)}\right] \\&=\ds\frac{p \beta}{M}\parallel \bm u_{\ell}^{(qC-C+c)}\parallel^2 \leq \ds\frac{p \beta}{M}\; ,
\end{array}
\end{equation}
where the fact that $\mathbb{E} \left[ \bm h^{(q-1)}\bm h^{(q-1),H}\right]=\beta \bm I_M$ has been exploited. As a consequence, we have:
\begin{equation}
\lim_{M \rightarrow \infty} \bm A_M = \bm 0 \,.
\end{equation}
Applying the continuous mapping theorem \cite{Convergence_Prob1999,shao2003mathematical}, we have that 
\[
\lim_{M \rightarrow \infty} \text{tr}^2 \left[ \bm A_M \bm A_M^H \right]=\text{tr}^2 \left[\lim_{M \rightarrow \infty}\bm A_M \bm A_M^H \right]=0\; ,
\]
which proves that 
$
\lim_{M \rightarrow \infty } \text{tr}^2 \left( \bm D_M \right)=0
$, 
and thus
\[
\lim_{M \rightarrow \infty} \bm G= \lim_{M \rightarrow \infty} \frac{1}{\sigma^2_w} \widetilde{\bm H}^H \widetilde{\bm H}= \ds \frac{p \beta}{\sigma^2_w}{\bm I}_N\; .
\]

\end{proof}

\section{Receiver processing} \label{receiver_processing_section}

In this section we focus on the signal processing algorithms at the BS to estimate the uplink channels and decode the MSs data symbols. We discuss both the case that the receiver has access to a clutter map, i.e. it has knowledge of the delays $\tau_q$ and directions of arrival $\theta_q$ of the clutter echoes, and the case that such information is not available.

\subsection{Uplink channel estimation} \label{CE_section}
We start considering the training phase, where the MSs send pilot signals to allow channel estimation at the BS. Given the data model \eqref{eq:observable_estimation_q}, we detail two different channel estimation strategies; the former does not need any information about  the clutter at the BS, while the latter assumes knowledge of the clutter statistics. We define
\begin{equation}
\widetilde{\mathbf{P}}_k^{(q)} \triangleq  \frac{\mathbf{P}_k^{(q)}}{\norm{\mathbf{P}_k^{(q)}}^2} \, \forall \; k = 1, \ldots, K \, , \forall \, q=1,\ldots, Q.
 \label{eq:P_tilde_def}
\end{equation}

\subsubsection{Pilot matched channel estimation (PM CE)} \label{PM_CE}
A simple estimator for the channel vector $\bh_k^{(q)}$, $\forall k, q$, is obtained through the following pilot-matched (PM) processing
\begin{equation}
\widehat{\bh}_k^{(q)}= \mathcal{Y}_q \ds\frac{\widetilde{\mathbf{P}}_k^{(q) \, *} }{\sqrt{p_{\rm{p},k}}} \; .
\label{eq:PM_estimation}
\end{equation}
The above processing only needs  knowledge of the normalized pilot sequences in \eqref{eq:P_tilde_def} and of the power transmitted by the users during the uplink training.

\subsubsection{MMSE channnel estimation (MMSE CE)} \label{LMMSE_CE}
Alternatively, a better performing estimator can be obtained by resorting to the linear MMSE criterion.
We focus on the observable in Eq. \eqref{eq:observable_estimation_q}.
The BS forms the following $M$-dimensional vector
\begin{equation}
\mathbf{r}_{q,k} = \mathcal{Y}_q \widetilde{\mathbf{P}}_k^{(q)\, *} =   \ds \sum_{j=1}^K\sqrt{p_{\rm{p},j}}\mathbf{h}_j^{(q)} \mathbf{P}_j^{(q) \, T} \widetilde{\mathbf{P}}_k^{(q)\, *}
+ \widetilde{\mathbf{w}}_{q,k} + \mathcal{C}_q \widetilde{\mathbf{P}}_k^{(q)\, *}\; ,
\label{eq:observable_estimation_q_k}
\end{equation}
where $\widetilde{\mathbf{w}}_{q,k}=\mathcal{W}_q \widetilde{\mathbf{P}}_k^{(q)\, *}$ is a $\mathcal{CN}\left(0, \sigma_w^2 \mathbf{I}_M\right)$ random vector.
The MMSE estimate of the $M-$dimensional channel vector $\mathbf{h}_k^{(q)}$ can be then computed as follows\cite{kay1998}:
\begin{equation}
\widehat{\mathbf{h}}_k^{(q)}= \mathbf{D}_{q,k} \mathbf{r}_{q,k} \, ,
\label{MMSE_estimate}
\end{equation}
where 
\begin{equation}
\mathbf{D}_{q,k}=\mathbb{E}\left[\mathbf{h}_k^{(q)} \mathbf{r}_{q,k}^H\right] \left( \mathbb{E}\left[\mathbf{r}_{q,k} \mathbf{r}_{q,k}^H\right]\right)^{-1}=\sqrt{p_{\rm{p},k}} \beta_k^2 \mathbf{R}_{q,k}^{-1},
\label{D_qk}
\end{equation}
with $\mathbf{R}_{q,k}\triangleq \mathbb{E}\left[\mathbf{r}_{q,k} \mathbf{r}_{q,k}^H\right]$ the covariance matrix of the vector  $\mathbf{r}_{q,k}$.  In order to provide an explicit expression for this matrix, we first  rewrite Eq. \eqref{eq:observable_estimation_q_k} as
\begin{equation}
\begin{array}{llll}
\mathbf{r}_{q,k} &= \ds \sum_{j=1}^K\sqrt{p_{\rm{p},j}}\mathbf{h}_j^{(q)} \mathbf{P}_j^{(q) \, T} \widetilde{\mathbf{P}}_k^{(q)\, *}
+ \widetilde{\mathbf{w}}_{q,k} \\ & + \left( \mathbf{I}_M \otimes \widetilde{\mathbf{P}}_k^{(q)\, H}\right) \mathbf{F} \text{vec} \left(\mathcal{C}_q \right) \; .
\end{array}
\label{eq:observable_estimation_q_k2}
\end{equation}
In \eqref{eq:observable_estimation_q_k2}, $\mathbf{F}$ is an $(MTC \times MTC)$-dimensional permutation matrix such that $\text{vec} \left(\mathcal{C}_q^T \right)=\mathbf{F} \text{vec} \left(\mathcal{C}_q \right)$, and the relation 
\begin{equation}
\mathcal{C}_q \widetilde{\mathbf{P}}_k^{(q)\, *}= \left( \mathbf{I}_M \otimes \widetilde{\mathbf{P}}_k^{(q)\, H}\right) \text{vec} \left(\mathcal{C}_q^T \right)  \, ,
\end{equation}
has been used. 

Given \eqref{eq:observable_estimation_q_k2}, 
it is straightforward to express 
 $\mathbf{R}_{q,k}$ as the superposition of the following three contributions:
\begin{equation}
\mathbf{R}_{q,k}=\ds \sum_{j=1}^K p_{\rm{p},j}\beta_j^2 \mathbf{I}_M \left|\mathbf{P}_j^{(q) \, T} \widetilde{\mathbf{P}}_k^{(q)\, *}\right|^2 + \sigma_w^2 \mathbf{I}_M +  \widetilde{\mathbf{K}}_{{\rm c},k}^{(q)} \, ,
\label{R_y}
\end{equation}
where
\begin{equation}
\!\!\widetilde{\mathbf{K}}_{{\rm c},k}^{(q)}\!=\!\left(\! \mathbf{I}_M \!\otimes\! \widetilde{\mathbf{P}}_k^{(q)\, H}\!\right) \!\mathbf{F} \mathbb{E}\!\!\left[ \!\text{vec} \left(\mathcal{C}_q \right)\text{vec} \left(\mathcal{C}_q \right)^H\right]\!\! \mathbf{F}^H \!\!\left( \!\mathbf{I}_M \!\otimes \! \widetilde{\mathbf{P}}_k^{(q)}\right) \, .
\label{K_c_q_tilde}
\end{equation}
To fully specify  
$\mathbf{R}_{q,k}$, we still need to provide an explicit expression for  $\mathbb{E}\left[ \text{vec} \left(\mathcal{C}_q \right)\text{vec} \left(\mathcal{C}_q \right)^H\right]$. To this end, we assume that, after the CP cancellation stage, the set of the scatterers corrupting the reception of different packets are disjoint, i.e. $ {\cal S}(\ell_1) \cap {\cal S}(\ell_2) = \emptyset$. Under these circumstances and using the expressions in Eqs. \eqref{clutter_matrix} and \eqref{R_tilde_definition}, it is easily shown that
\begin{equation}
\mathbb{E}\!\!\left[ \!\text{vec} \left(\mathcal{C}_q \right)\text{vec} \left(\mathcal{C}_q \right)^H\!\right]=\mbox{blkdiag}\left(\bm K_{\rm c}\left(1\right)^{(q)}, \ldots,
\bm K_{\rm c}\left(T\right)^{(q)}\right)\; ,
\end{equation}
where $\bm K_{\rm c}\left(\ell\right)^{(q)}$ is the following  $(MC \times MC)-$ dimensional matrix
\begin{equation}
\begin{array}{lll}
\bm K_{\rm c}\left(\ell\right)^{(q)}= & \ds \sum_{p \in {\cal S}(\ell)}
\sum_{m=0}^{Q-1} P_T \sigma^2_{p,m} \left[\widetilde{\bR}_{p,\ell,m}((q-1)C+1:qC)
\right. \\ & \left.
\widetilde{\bR}_{p,\ell,m}^H((q-1)C+1:qC) \right] \otimes \mathbf{b}(\theta_p)\mathbf{b}^H(\theta_p) \; ,
\end{array}
\end{equation}
with $\ell=1, \ldots, T$ and  $\sigma^2_{p,m}= \mathbb{E} \left[ |\beta_{p,m}|^2\right]$.

As a final remark, we notice that the MMSE channel estimation procedure, unlike the PM estimator, assumes complete knowledge of the clutter statistics and of the noise variance. This assumption, along with the heavier computational complexity entailed by matrix inversion, is expectedly rewarded by increased robustness to the clutter disturbance {\em and} to the multiuser interference, inherent in the considered non-orthogonal multiple access scheme.

\subsection{Uplink data detection} \label{Detection_strategies}
We now focus on the problem of uplink data detection, processing separately, for the sake of simplicity, the columns of the received matrix $\bY(\ell)$; recall that the $n$-th column of $\bY(\ell)$, 
$\mathbf{y}^{(n)}(\ell)$ is expressed as in \eqref{eq:vector_observable}.  As anticipated, we consider both the case of complete prior knowledge of clutter covariance properties, and the case that the relevant clutter parameters - such as the angles at which scattering centers are located - are unknown and must be averaged out.
Our baseline detector is the classical  \textit{channel-matched beamforming (CM)}. Based on the channel estimate $\widehat{\bh}_k^{(\lceil n/C \rceil)}$, a soft estimate of $\mathbf{X}_k(\ell)^{(n)}$ is built as
\begin{equation}
\widehat{\mathbf{X}}_k(\ell)^{(n)}=\ds \frac{\widehat{\bh}_k^{(\lceil n/C \rceil)\, H}
\by^{(n)}(\ell) }{\sqrt{p_k}\left\|\widehat{\bh}_k^{(\lceil n/C \rceil)}\right\|^2} \; .
\end{equation}
This detector just relies on the $M$-dimensional channel signature to reject the clutter and multiuser interference. 

\subsubsection{Clutter aware processing (CAP)} \label{CAP_section}
In this scenario, the covariance matrix of the clutter vector $\mathbf{C}(\ell)^{(n)}$ is assumed known, and takes on the form:
\begin{equation}
\mathbf{K}_{\mathbf{C}(\ell)^{(n)}}=\ds \sum_{q \in {\cal S}(\ell)}
\sum_{m=0}^{Q-1}P_T  \sigma^2_{q,m} \left|\widetilde{\bR}_{q,\ell,m}^{(n)}\right|^2
\mathbf{b}(\theta_q)\mathbf{b}^H(\theta_q) \; .
\label{clutter_covariance_ell_n}
\end{equation}
A number of linear receivers exploiting such a knowledge to demodulate the data symbols $\mathbf{X}_k(\ell)^{(n)}$ based on the data model
\eqref{eq:vector_observable} can thus be implemented. 

\begin{itemize}
\item \textit{Zero-Forced clutter (ZF)}. This receiver exploits the low-rank feature of the clutter covariance matrix  \eqref{clutter_covariance_ell_n} and  zero-forces the clutter contribution by projecting the
observable data vector along a direction that is orthogonal to the clutter subspace. 
Letting $\mathbf{U}(\ell)^{(n)}$ be a matrix containing the eigenvectors of the matrix $\mathbf{K}_{\mathbf{C}(\ell)^{(n)}}$ associated to non-zero eigenvalues, we have in this case
\begin{equation}
\widehat{\mathbf{X}}_k(\ell)^{(n)}\!=\!\ds \frac{\left[
\left( \bI_{M}\!-\! \mathbf{U}(\ell)^{(n)}\mathbf{U}(\ell)^{(n)\, H}\right)
\widehat{\bh}_k^{(\lceil n/C \rceil)}\right]^H \!\!\!
\by^{(n)}(\ell) }{\sqrt{p_k}\left\|\left( \bI_{M}- \mathbf{U}(\ell)^{(n)}\mathbf{U}(\ell)^{(n)\, H}\right)\widehat{\bh}_k^{(\lceil n/C \rceil)}\right\|^2}  .
\label{eq:ZFprocessing}
\end{equation}

\item \textit{Linear MMSE data detector}. 
This receiver performs a linear MMSE estimation of the data symbol ${\mathbf{X}}_k(\ell)^{(n)}$.
We thus have:
\begin{equation}
\widehat{\mathbf{X}}_k(\ell)^{(n)}= \sqrt{p_k} \widehat{\bh}_k^{(\lceil n/C \rceil)\, H}
\mathbf{K}_{\mathbf{y}(\ell)^{(n)}}^{-1}  \by^{(n)}(\ell) \; ,
\end{equation}
with
\begin{equation}
\mathbf{K}_{\mathbf{y}(\ell)^{(n)}} =\ds \sum_{j=1}^K p_j
\widehat{\bh}_j^{(\lceil n/C \rceil)}  \widehat{\bh}_j^{(\lceil n/C \rceil)\, H} +
\sigma^2_w \bI_M+ \mathbf{K}_{\mathbf{C}(\ell)^{(n)}} \; .
\end{equation}
The LMMSE data detector provides robustness against both the clutter disturbance and the multiuser interference by the other users in the system. 

\item \textit{Full zero-forcing (FZF)}: 
The above receiver is capable of nulling the clutter contribution, but does not provide any improved protection with respect to the multiuser interference. To circumvent this drawback, we thus consider 
an FZF receiver that zero-forces both disturbance sources. Let thus $\mathbf{U}_k(\ell)^{(n)}$ be a matrix containing the eigenvectors of the matrix
$\mathbf{K}_{\mathbf{y}(\ell)^{(n)}} - \sigma^2_w \bI_M - p_k
\widehat{\bh}_k^{(\lceil n/C \rceil)}  \widehat{\bh}_k^{(\lceil n/C \rceil)\, H}$
associated to non-zero eigenvalues. The data estimator of   $\mathbf{X}_k(\ell)^{(n)}$ is now written as
\begin{equation}
\widehat{\mathbf{X}}_k(\ell)^{(n)}\!=\!\ds \frac{\left[
\left( \bI_{M}\!\!-\!\! \mathbf{U}_k(\ell)^{(n)}\mathbf{U}_k(\ell)^{(n)\, H}\right)
\widehat{\bh}_k^{(\lceil n/C \rceil)}\right]^H\!\!\!
\by^{(n)}(\ell) }{\sqrt{p_k}\left\|\left( \bI_{M}- \mathbf{U}_k(\ell)^{(n)}\mathbf{U}_k(\ell)^{(n)\, H}\right)\widehat{\bh}_k^{(\lceil n/C \rceil)}\right\|^2}  .
\end{equation}
\end{itemize}

\subsubsection{No Clutter aware processing (NCAP)} \label{NCAP_section}
The above CAP receivers assume knowledge of the clutter covariance matrix, and in particular of the clutter direction of arrivals $\{\theta_q\}$. We now detail two receiver structures that 
do not rely on this information. 

\begin{itemize}

\item \textit{Bessel-based zero-forced clutter (BZF)}. 
One possible way to avoid relying on the knowledge of the clutter direction-of-arrival angles is to model them as independent random variables, uniformly distributed  on $[-\pi, \pi]$. The clutter covariance matrix in this case is obtained by averaging \eqref{clutter_covariance_ell_n} with respect to the angles $\{\theta_q\}$. Letting 
\begin{equation}
\mathbf{B}= \mathbb{E}\left[\mathbf{b}(\theta)\mathbf{b}^H(\theta)\right],
\end{equation}
and exploiting the definition of $\mathbf{b}(\theta)$ in Eq. \eqref{BS_ULA}, we have that the generic entry of the matrix $\mathbf{B}$ can be evaluated as follows:
\begin{equation}
\begin{array}{lll}
\left[\mathbf{B}\right]_{(\ell,m)}&= \mathbb{E}\left[e^{-j2\pi \frac{d}{\lambda} \sin\left(\theta\right) \left( \ell - m \right)}\right]\\ & \ds =\frac{1}{2\pi} \int_{- \pi}^{\pi} {e^{-j2\pi \frac{d}{\lambda} \sin\left(\theta\right) \left( \ell - m \right)} \, d \theta}\\ & \ds= J_0\left(  \frac{2\pi d}{\lambda} \left( \ell - m \right) \right) \; ,
\end{array}
\end{equation}
where $J_0(x)=\frac{1}{2\pi} \displaystyle{ \int_{- \pi}^{\pi} {e^{-jx\sin\left( \theta \right)} \, d\theta}}$ is the first kind Bessel function of order 0. 
It is easy to realize that the clutter covariance matrix averaged with respect to the direction-of-arrival is proportional to $\mathbf{B}$. A possible detection strategy is thus to use the beamformer reported in \eqref{eq:ZFprocessing}, where now the matrix $\mathbf{U}(\ell)^{(n)}$ is no longer dependent on $\ell$ and contains the eigenvectors associated to the $P$ largest eigenvalues of the matrix  $\mathbf{B}$, with $P$ a design parameter to be carefully tuned: needless to say, a sensible choice should compromise between the conflicting requirements of rejecting as much clutter as possible, while limiting the inevitable noise enhancement entailed by the projection operation.

\item \textit{Angles of arrival (AoA) estimation-based zero-forced clutter (AEZF)}. 
This strategy relies on the fact that the clutter disturbance appears in the observable vector at the BS with a signature with known functional form, i.e. the BS antenna array response at the unknown AoA's. 
In the following, we propose a heuristic procedure for estimating the clutter AoAs, and then consider a reciver that zero-forces the array response vectors corresponding to the estimated AoA's. 
We consider again the data model  \eqref{eq:vector_observable} corresponding to the transmission of  $N$ symbols, and consider the function
\begin{equation}
f_{\ell}\left(\theta\right)=\sum_{n=1}^N {\left|\mathbf{b}^H\left(\theta\right)\mathbf{y}(\ell)^{(n)}\right|}.
\label{function_estimation}
\end{equation}
Evidently, this function should exhibit some local maxima when $\theta$ approaches any of the clutter AoA's, and, in particular, the value of the maximum is an indicator of how strong is the clutter along that AoA. Accordingly, a possible detection strategy is to estimate the strongest clutter AoA's and to zero-force the steering vector associated to these angles.
In order to do so, we let  $\theta^{(r)}=\frac{\pi(r-1)}{R}-\pi,  \; \;  r=1,\ldots,R$ denote a set of $R>>1$ angles uniformly spanning the range $\left[ -\frac{\pi}{2},\frac{\pi}{2}\right]$, and  then  evaluate the mean and the variance of the function $f_{\ell}\left(\theta\right)$  evaluated on this angle grid, i.e. we have 
\begin{equation}
m_{\ell}=\frac{1}{R} \sum_{r=1}^R{ f_{\ell}\left(\theta^{(r)}\right)} , \; \; v_{\ell}=\frac{1}{R} \sum_{r=1}^R{\left( f_{\ell}\left(\theta^{(r)}\right)- m_{\ell} \right)^2}.
\label{mean_variance_function_est}
\end{equation}
We decide to zero-force all the steering vectors corresponding to the angles that belong to the following set:
\begin{equation}
\mathcal{R}_{\ell}= \lbrace \theta^{(r)} \, : \, f_{\ell}\left(\theta^{(r)}\right) \geq m_{\ell}+2\sqrt{v_{\ell}} \rbrace.
\end{equation} 
The threshold setting at the level $m_{\ell}+2\sqrt{v_{\ell}}$ is heuristic and different choices can be obviously made. which, of course, have an impact on the cardinality of the set 
$\mathcal{R}_{\ell}$, and, thus, on the number of zero-forced directions of arrival.  
Additionally, the performance of this strategy also depends on how dense is the angle grid (i.e. on how large is $R$); in general, the larger $R$, the better the accuracy in the estimation of the clutter AoA's. 
\end{itemize}

\section{Uplink spectral efficiency derivation}

\begin{figure*}
\begin{equation}
\begin{array}{ll}
\widetilde{\mbox{SINR}}_k(\ell)^{(n)}= \ds
\frac{p_k |\mathbf{v}_k^{(q) \,  H} \mathbf{h}_k^{(q)}|^2}
{\ds \sum_{j \neq k}p_j |\mathbf{v}_k^{(q) \,  H} \mathbf{h}_j^{(q)}|^2
+ \sigma^2_w \|\mathbf{v}_k^{(q)}\|^2 + \mathbf{v}_k^{(q) \,  H} 
\mathbf{K}_{\mathbf{C}(\ell)^{(n)}}
\mathbf{v}_k^{(q) } 
} \; .
\end{array}
\label{eq:SINRtilde}
\end{equation}
\hrulefill
\end{figure*}

All of the previously outlined data detection strategies are linear. Denoting by $\mathbf{v}_k^{(q)}$, with $q=\lceil n/C \rceil$, 
 the vector used for detecting the data symbols $\mathbf{X}_k(\ell)^{(n)}$, based on the observable $\mathbf{y}(\ell)^{(n)}$, it is easily shown that the  post-detection SINR relative to user $k$,
$\widetilde{\mbox{SINR}}_k(\ell)^{(n)}$ say, can be expressed as is Eq. \eqref{eq:SINRtilde} at the top of next page. Although \eqref{eq:SINRtilde} provides a reasonable and correct expression for the SINR, it can be used to compute the achievable spectral efficiency (SE) through the Shannon rate formula only in the case in which the receiver has perfect CSI. Indeed, notice that \eqref{eq:SINRtilde} depends on the channel coefficients and on the beamformers, that, in turn, also depend on the channel coefficients. In case of perfect CSI, this SINR expression can be plugged into the Shannon rate formula in order to have the system achievable SE. Things are instead different when perfect CSI is not available. In this case, the SINR expression \eqref{eq:SINRtilde}  still holds, but it contains now the true channel values and the "approximate" beamformers, that have been computed based on the noisy channel estimates. Since \eqref{eq:SINRtilde} contains the true channel coefficients, that are unknown to the receiver, it follows that the SINR expression cannot be exactly computed at the receiver, and, thus, the Shannon SE $\log_2(1+ \widetilde{\mbox{SINR}}_k(\ell)^{(n)})$ 
is no longer "attainable" and becomes an upper bound \cite{Caire_bounds_2018}. When the receiver has imperfect CSI, an effective performance in terms of SE can be obtained using only information that is available at the receiver. In order to be able to analyze the system in terms of uplink SE, in the following we provide a lower bound  in the case of PM and MMSE CE, CM detection and knowledge of clutter covariance matrix at the BS. The bounding technique exploits the channel estimates only for computing the receive combining vectors, {while this information is not exploited
in the signal detection phase.} This simplification is reasonable when there is
substantial \textit{channel hardening}\cite{marzetta2016fundamentals,MassiveMIMO_book_Bjornson}.
Our analysis is carried on with reference to the $n$-th subcarrier of the $k$-th user in a generic packet;  in order to simplify the notation, we omit the packet index.
Denoting by  $\mathbf{v}_k^{(q)}$, with $q=\lceil n/C \rceil $,  the combining vector for the data transmitted by $k$-th user, and using Eq. \eqref{eq:vector_observable}, we have:

\begin{equation}
\begin{array}{lll}
\widehat{\mathbf{X}}_k^{(n)}&=\mathbf{v}_k^{(q)\, H}\mathbf{y}^{(n)}= \sqrt{p_k} \mathbf{X}_k^{(n)} \mathbf{v}_k^{(q)\, H} \mathbf{h}_k^{(q)} \\ & + \ds \sum_{\substack{j=1 \\ j \neq k}}^K
\sqrt{p_j}  \mathbf{v}_k^{(q)\, H} \mathbf{h}_j^{(q)} \mathbf{X}_j^{(n)}+ \mathbf{v}_k^{(q)\, H} \mathbf{W}^{(n)} + \mathbf{v}_k^{(q)\, H} \mathbf{C}^{(n)} \; ,
\end{array}
\label{eq:estimates_k_n}
\end{equation}
By adding and subtracting $\sqrt{p_k} \mathbb{E}\left[\mathbf{v}_k^{(q)\, H} \mathbf{h}_k^{(q)}\right] \mathbf{X}_k^{(n)}$, the signal in  \eqref{eq:estimates_k_n} can be rewritten as
\begin{equation}
\begin{array}{lll}
\widehat{\mathbf{X}}_k^{(n)}=&\underbrace{\sqrt{p_k} \mathbb{E}\left[\mathbf{v}_k^{(q)\, H} \mathbf{h}_k^{(q)}\right]\mathbf{X}_k^{(n)}}_{\text{Desired signal over average channel}}\\ & + \underbrace{\sqrt{p_k} \left( \mathbf{v}_k^{(q)\, H} \mathbf{h}_k^{(q)} - \mathbb{E}\left[\mathbf{v}_k^{(q)\, H} \mathbf{h}_k^{(q)}\right]\right) \mathbf{X}_k^{(n)}}_{\text{Desired signal over ``unknown'' channel}} \\ & + \underbrace{\ds \sum_{\substack{j=1 \\ j \neq k}}^K
\sqrt{p_j} \mathbf{v}_k^{(q)\, H} \mathbf{h}_j^{(q)} \mathbf{X}_j^{(n)} }_{\text{Interference}}+ \underbrace{\mathbf{v}_k^{(q)\, H} \mathbf{W}^{(n)}}_{\text{Noise}} + \underbrace{\mathbf{v}_k^{(q)\, H} \mathbf{C}^{(n)}}_{\text{Clutter disturbance}} \; ,
\end{array}
\label{eq:estimates_k_n2}
\end{equation}
 Only the part of the desired signal received over the average precoded channel $ \mathbb{E}\left[\mathbf{v}_k^{(q)\, H} \mathbf{h}_k^{(q)}\right]$ is treated
as the true desired signal. The part of $\mathbf{X}_k^{(n)}$ received over the deviation
from the mean value $ \mathbf{v}_k^{(q)\, H} \mathbf{h}_k^{(q)} - \mathbb{E}\left[\mathbf{v}_k^{(q)\, H} \mathbf{h}_k^{(q)}\right]$ has zero mean and can thus
be treated as an uncorrelated noise signal in the detection phase. The UL ergodic channel capacity of  the $k-$th user on the $n-$th subcarrier is thus lower bounded by

\begin{equation}
\text{SE}_k^{(n)}= \frac{N_{\rm pkt}-T}{N_{\rm pkt}}\log_2 \left(1 + \text{SINR}_k^{(n)}\right) \; \text{[bit/s/Hz]} \, ,
\label{SE_lower_bound}
\end{equation}
with  $\text{SINR}_k^{(n)}$ shown in \eqref{SINR_general} at the top of the next page.

\begin{figure*}[t]
\begin{equation}
\text{SINR}_k^{(n)}= \frac{p_k \left| \mathbb{E}\left[\mathbf{v}_k^{(q)\, H} \mathbf{h}_k^{(q)}\right]\right|^2}{\ds \sum_{\substack{j=1}}^K
p_j \mathbb{E} \left[ \left|\mathbf{v}_k^{(q)\, H} \mathbf{h}_j^{(q)}\right|^2\right]- p_k \left| \mathbb{E}\left[\mathbf{v}_k^{(q)\, H} \mathbf{h}_k^{(q)}\right]\right|^2 + \sigma^2_w \mathbb{E}\left[ \norm{\mathbf{v}_k^{(q)}}^2 \right] + \mathbf{E} \left[ \left| \mathbf{v}_k^{(q)\, H} \mathbf{C}^{(n)} \right|^2\right]}
\label{SINR_general}
\end{equation}
\end{figure*}
The lower bound in Eq. \eqref{SE_lower_bound} is known as the use-and-then-forget (UatF) bound since the channel estimates are used for combining and then effectively ``forgotten'' before signal detection \cite{marzetta2016fundamentals}. Note that the SINR expression in Eq. \eqref{SINR_general} at the top of next page is deterministic and contains several expectations over the random channel realizations. 
For the case that an arbitrary combining vector is used, 
each of these expectations can be individually computed by means of Monte Carlo simulation. For CM combining,
instead, they can be obtained in closed form. 
Indeed, it can be shown that with CM combining and  PM channel estimation, Eq. \eqref{SINR_general} can be written as in \eqref{SINR_MR_PM} at the top of next page,
\begin{figure*}
\begin{equation}
\text{SINR}_{k, {\rm PM}}^{(q)}= \frac{p_k p_{\rm{p},k} \beta_k^4 M^2}{
\begin{array}{lll}
&\ds \sum_{\substack{j=1 \\ j\neq k}}^K
p_j  p_{\rm{p},j} \beta_j^4 M^2 \left| \mathbf{P}_j^{(q) \, T} \widetilde{\mathbf{P}}_k^{(q)\, *}\right|^2 + \ds \sum_{\substack{j=1}}^K p_j \beta_j^2  \text{tr} \left( \mathbf{R}_{q,k}\right) +  \sigma^2_w  \text{tr}\left( \mathbf{R}_{q,k} \right) + \text{tr}\left( \mathbf{R}_{q,k} \mathbf{K}_{\mathbf{C}^{(n)}} \right)
\end{array}}
\label{SINR_MR_PM}
\end{equation}
\end{figure*}
while, for the case of MMSE channel estimation, \eqref{SINR_general} is expressed as in
\eqref{SINR_MR_MMSE}, again shown on  next page.
\begin{figure*}\begin{equation}
\text{SINR}_{k, {\rm LMMSE}}^{(q)}= \frac{p_k p_{\rm{p},k} \beta_k^4 \text{tr}\left( \mathbf{D}_{q,k} \right)^2}{
\begin{array}{lll}
&\ds \sum_{\substack{j=1 \\ j\neq k}}^K
p_j  p_{\rm{p},j} \beta_j^4 \text{tr} \left( \mathbf{D}_{q,k}\right)^2 \left| \mathbf{P}_j^{(q) \, T} \widetilde{\mathbf{P}}_k^{(q)\, *}\right|^2 + \ds \sum_{\substack{j=1}}^K p_j \sqrt{p_{\rm{p},k}}\beta_j^2 \beta_k^2 \text{tr} \left( \mathbf{D}_{q,k}\right) \\& + \sqrt{p_{\rm{p},k}} \beta_k^2 \left[\sigma^2_w  \text{tr}\left( \mathbf{D}_{q,k} \right) + \text{tr}\left( \mathbf{D}_{q,k} \mathbf{K}_{\mathbf{C}^{(n)}} \right)\right]
\end{array}}
\label{SINR_MR_MMSE}
\end{equation}
\hrulefill
\end{figure*}
The proof of the validity of \eqref{SINR_MR_PM} and \eqref{SINR_MR_MMSE} is provided in Appendixes \ref{App_A} and  \ref{App_B}, respectively. As a final remark, we notice that the last summand in the denominator of  \eqref{SINR_MR_PM} and \eqref{SINR_MR_MMSE}  represents the effect of the clutter disturbance on the uplink SINR of the communication system. Although numerical results show that the overall system performance improves with increasing number of antennas $M$, it appears that the effect of this term does not vanish in the limit of large number of antennas.

\begin{figure*}
\begin{center}
\includegraphics[scale=0.43]{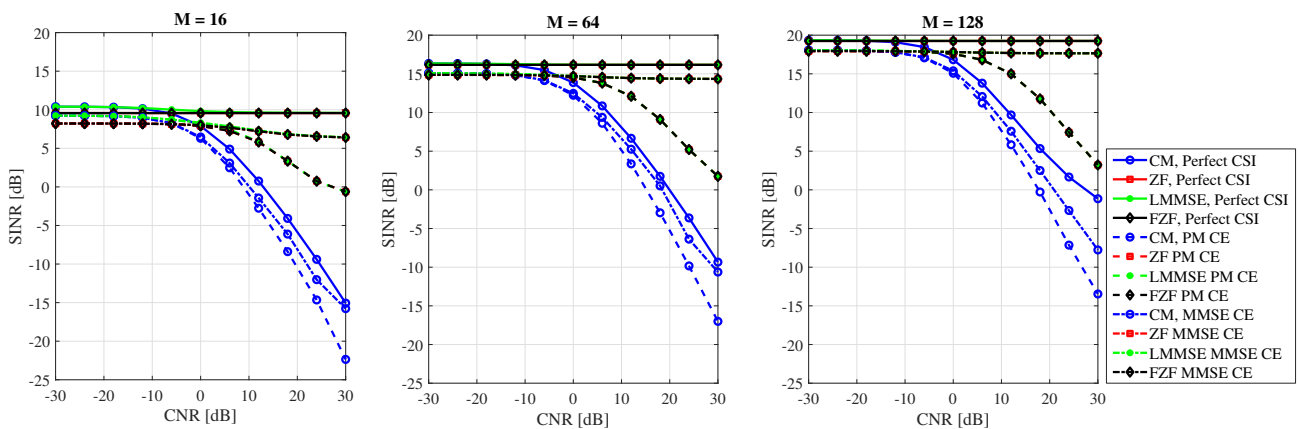}
\caption{SINR versus CNR of four detection strategies in the cases of perfect CSI, PM CE and MMSE CE, with $K=1$ and different values of $M$.}
\label{Fig:SINR_CE_single_user}
\end{center}
\end{figure*}

\section{Numerical results}
Numerical results are now shown in order to corroborate previous analytical findings and to illustrate the performance of the proposed detection structures. 
The simulation environment adopts the system parameters detailed in Section \ref{system_model_section}; additionally, the channel vectors between the BS and the MSs are generated through the superposition of small-scale Rayleigh distributed fading (independent across antennas at the BS large array), log-normal shadowing, and distance-dependent path-loss --  the three slope path loss model detailed in reference \cite{buzzi_CFUC2017} is used. The MSs distance from the BS is uniform in the range $[20, 500]$ m., the additive thermal noise is assumed to have a power spectral density of -174 dBm/Hz, and the front-end receiver is assumed to have a noise figure of 3 dB. The simulation parameters are summarized in Table \ref{sim_par_table}. 

\begin{table}[]
\centering
\caption{Simulation Parameters}
\label{sim_par_table}
\begin{tabular}{|p{0.7cm}|p{1.8cm}|p{4.3cm}|}
\hline
\textbf{Name}                   & \textbf{Value} & \textbf{Description}                                                                                                                \\ \hline
$f_c$                           & 3 GHz          & carrier frequency                                                                                                                   \\ \hline
$M$                             & 16, 64, 128    & number of antennas at the BS
 \\ \hline
$d$                             & $\frac{\lambda}{2}$    & antenna spacing
 \\ \hline
$K$                             & 1, 10, 30           & number of users in the cellular system uniformly distributed in the range {[}20,500{]} m\\ \hline
$p_k$                           & 100 mW         & MSs transmit power in training and data transmission phases                             \\ \hline
$N$                             & 4096           & number of subcarriers                                                                                                                \\ \hline
$\Delta_f$                      & 30 kHz         & subcarrier spacing                                                                                                                  \\ \hline
$C$                             & 16             & number of consecutive subcarriers where the channel is considered constant             
\\ \hline
$Q$                    & $\frac{N}{C}=256$            & number of scalar coefficients representing the amount of CSI for each user and for each BS antennas   
\\ \hline
$N_s$                    & 100            & number of total scatterers in the system uniformly distributed   in the range {[}1,150{]} km   
\\ \hline
$N_{\rm CP}$                    & 288            & discrete length of the cyclic-prefix                                                                                                \\ \hline
$T_{\rm PRT}$ & 1 ms           & radar pulse repetition time                                                                                                         \\ \hline
$N_{\rm pkt}$ & 14             & number of packets into a 0.5 ms timeslot                                                                                            \\ \hline
$T_s$                           & 8.146 ns       & symbol time                                                                                                                         \\ \hline
$T$                             & 7              & number of packets used for the channel estimation                                                                                   \\ \hline
$L$                             & 32             & discrete length of the radar coded waveform                                                                                         \\ \hline
$F$                             & 3 dB           & noise figure at the receiver                                                                                                        \\ \hline
$\mathcal{N}_0$                 & -174 dBm/Hz    & power spectral density of the noise                                                                                                 \\ \hline
\end{tabular}
\end{table}

We report results for the case of perfect CSI, for the case of  PM channel estimation as detailed in Section \ref{PM_CE} and for the case of MMSE channel estimation as detailed in Section \ref{LMMSE_CE}, with $T=7$.
The MSs transmit power is set at $100$ mW, both in the training and data transmission phases, i.e. $p_k=p_{{\rm p},k}=100$ mW, $\forall k=1,l\ldots,K$.

\begin{figure*}
\begin{center}
\includegraphics[scale=0.43]{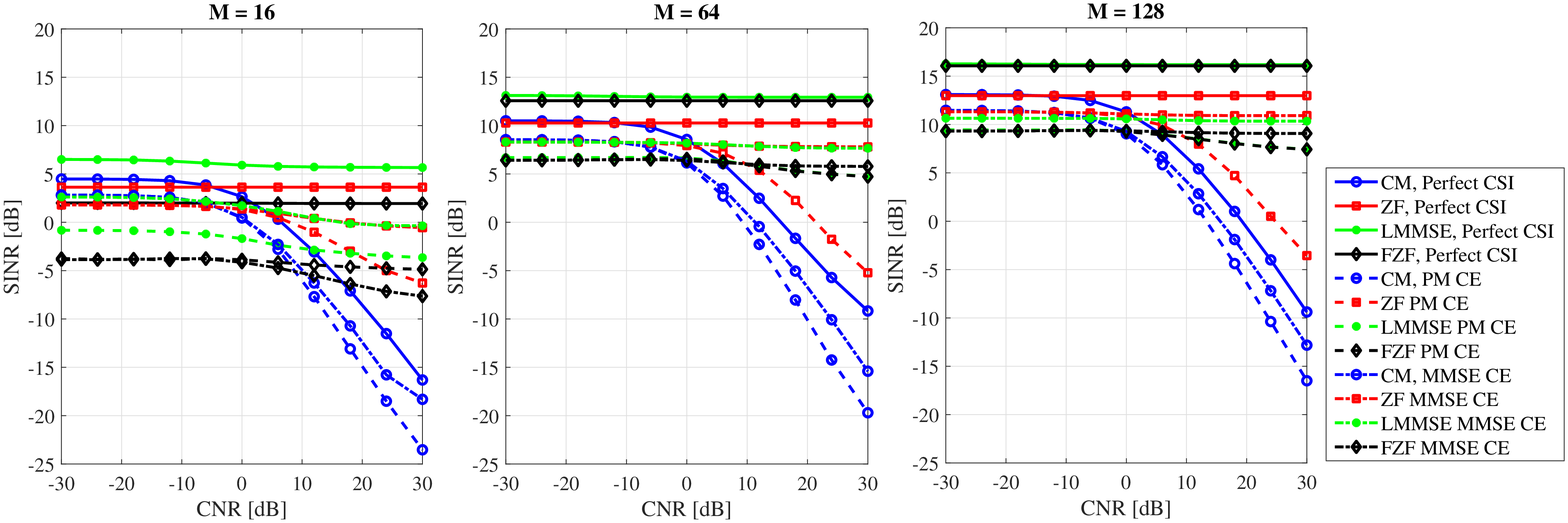}
\caption{SINR versus CNR of four detection strategies in the cases of perfect CSI, PM CE and MMSE CE, with $K=10$ and different values of $M$.}
\label{Fig:SINR_CE_multiple_users}
\end{center}
\end{figure*}

Fig. \ref{Fig:SINR_CE_single_user} reports the average per user SINR \eqref{eq:SINRtilde}  versus the Clutter-to-Noise Ratio (CNR) for the four detection strategies in the case of CAP detailed in Section \ref{CAP_section}, considering the case of perfect CSI, PM channel estimation and MMSE channel estimation; the figure considers a single-user system, and contains three subplots, in order to
illustrate the detectors' performance for three different values of the BS array size $M$.
Fig. \ref{Fig:SINR_CE_multiple_users} shows exactly the same results as Fig. \ref{Fig:SINR_CE_single_user}, with the only difference that a  multiuser system, with $K=10$ users, 
has been considered. Inspecting the figures, 
several comments can be made. First of all, results clearly show that, regardless of the data detection structure, performance steadily improves with increasing $M$. This provides a first numerical confirmation that massive MIMO systems are intrinsically resistant to co-existing radar clutter interference. The figures also show that, as expected, CM beamforming is the most vulnerable combining scheme to interference, while the other strategies exhibit much better performance. Sorting the detection strategies in ascending performance order we have CM, ZF, FZF and, finally, MMSE\footnote{Note that in the single-user case, ZF and FZF strategies end up coincident, and this is why the curves showing the performance of these two detectors in Fig. \ref{Fig:SINR_CE_single_user} perfectly overlap.}. Regarding channel estimation, as expected, MMSE channel estimation achieves better performance than PM channel estimation. Overall, Figs.  \ref{Fig:SINR_CE_single_user} and  \ref{Fig:SINR_CE_multiple_users} provide evidence that, if clutter second-order statistics are known at the BS, several strategies exist to tackle the additional interference caused by a co-existing radar system, with better and better performance for increasing number of BS antennas: increasing the antenna array size from $M=16$ to $M=128$ provides indeed a $10 \,$dB increase in the received SINR. 

\begin{figure}
\begin{center}
\includegraphics[scale=0.55]{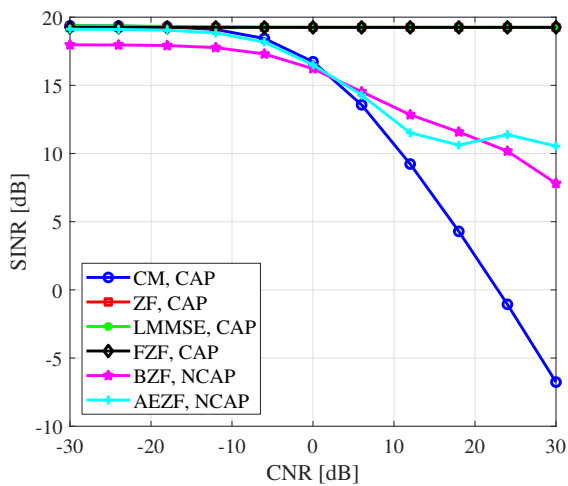}
\caption{SINR versus CNR of the proposed detection strategies in the case of CAP and NCAP, with $K=1$ and $M=128$.}
\label{Fig:SINR_NCAP_single_user}
\end{center}
\end{figure}

\begin{figure}
\begin{center}
\includegraphics[scale=0.55]{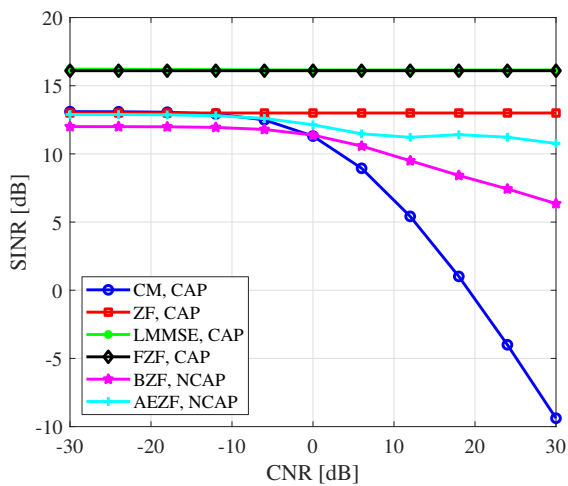}
\caption{SINR versus CNR of the proposed detection strategies in the case of CAP and NCAP, with $K=10$ and $M=128$.}
\label{Fig:SINR_NCAP_multiple_users}
\end{center}
\end{figure}

We now turn our attention to the performance of the receivers designed with no prior information on the clutter. 
Figs. \ref{Fig:SINR_NCAP_single_user} and \ref{Fig:SINR_NCAP_multiple_users} report the average SINR per user versus the CNR in the case of NCAP; for benchmarking purposes, we show in the same plot also the performance of the CM, ZF and FZF CAP rules. A BS array size $M=128$ and perfect CSI is assumed here. 
While Fig. \ref{Fig:SINR_NCAP_single_user} refers to a single-user system,  Fig. \ref{Fig:SINR_NCAP_multiple_users} refers instead to a multiuser system with $K=10$.
For the BZF processing, we use $P=35$, i.e. the ZF processing nulls the interference lying in the subspace spanned  by the eigenvectors of the matrix $\mathbf{B}$ corresponding to the 35 largest eigenvalues\footnote{With this choice  80$\%$ of the total energy of the matrix eigenvalues is captured.}. For the AEZF processing we use $R=500$, i.e., in order to estimate the AoA's we implement an exhaustive search over $500$ angles uniformly spanning the range $\left[ -\frac{\pi}{2},\frac{\pi}{2}\right]$.
From the figures it is seen that, even though ZF and FZF CAP rules are of course the best strategies and exhibit a behavior that is independent of the CNR, the proposed NCAP strategies outperform, in the critical region of large CNR, the CM beamformer. In particular, in the critical scenario that CNR$=30\,$ dB, the BZF and AEZF rules achieve in the multiuser case a SINR of 4 and 10.5 dB, respectively, while the CM beamformer achieves a SINR equal to $-8\,$dB. Moreover, it is seen that the AEZF performance exhibits a floor in the large CNR region, which seems reasonable since in this case the AoA's of the clutter disturbance are well estimated and their effect can be perfectly 
zero-forced.

\begin{figure*}
\centering
\includegraphics[scale=0.43]{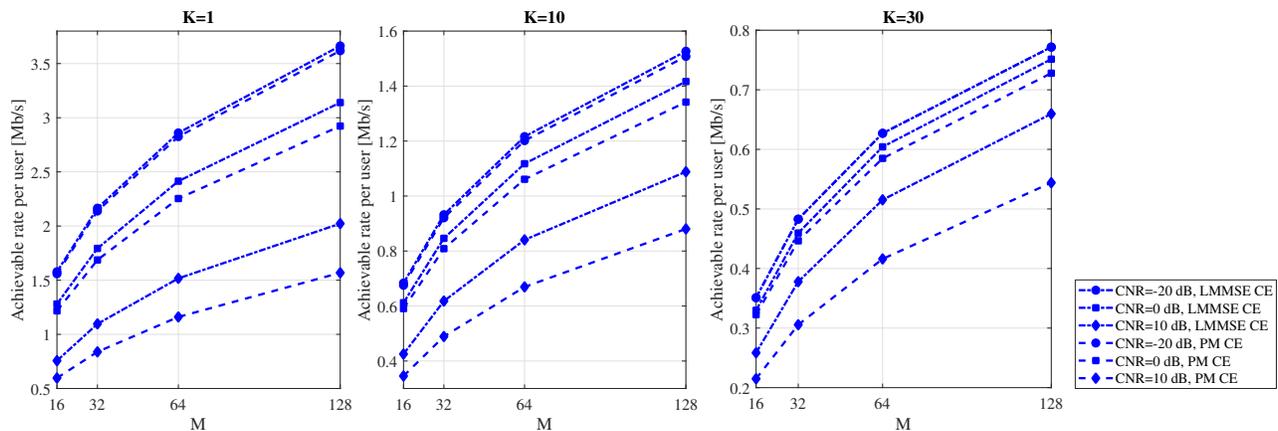}
\caption{Spectral efficiency lower bound versus $M$ for PM and MMSE channel estimation procedures for different values of CNR with $K=1$,  $K=10$, and $K=30$.}
\label{Fig:SE_complete}
\end{figure*}

Finally, Fig. \ref{Fig:SE_complete} reports the SE lower bounds
\eqref{SINR_MR_PM} and \eqref{SINR_MR_MMSE}, for the case of CM detection, and with PM and   
MMSE channel estimation. The results are plotted versus the number of antennas at the BS $M$, for three different values of the CNR; each subfigure refers to different numbers of users in the system. In particular, the third subfigure considers 
a quite overloaded system with $K=30$ users using the same frequency band. Results show that for increasing
number of users the achievable rate per user decreases, even though the overall sum-rate gets increased. This is in line
with the results reported in the massive MIMO literature.  
The results also confirm that the performance grows with increasing $M$, regardless of the value of the CNR. As expected, also in this case MMSE channel estimation permits attaining better performance than PM channel estimation.

\section{Conclusions}
The paper has considered a single-cell massive MIMO communication system using SC-FDMA multiple access in the uplink and coexisting with a radar system using the same frequency band, proposing and assessing uplink reception algorithms,  for both the cases that the radar signal covariance matrix was known at the receiver, 
and that no prior knowledge of the clutter statistics were available. At the analysis stage, we have used information-theoretic arguments to show that, in the large number of BS antennas limit, and with perfect CSI, the system becomes resistant to clutter effects. This fact has been also confirmed by the numerical results, that have shown that, for all the considered data detection structures, the performance steadily increases as the array size at the BS grows larger. 
When instead imperfect CSI is taken into account, the clutter effect does not vanish for large number of antennas, even though the numerical results have confirmed that, for the considered values of $M$, performance improves for increasing $M$ also in this case.
The encouraging results of this paper suggest thus another nice peculiarity of massive MIMO systems, i.e. the resilience to external interference, and should thus attract interest on how to suitably explot massive MIMO to enable the co-existence, in the same frequency band, of both radar and cellular communication systems. This study can indeed be extended along many different tracks. First of all, the paper has considered only the uplink, thus implying that its natural extension regards the consideration of the downlink: it is expected that the BS can in this case play an active role in alleviating the radar interference received by the user devices. Then, this study has not considered, due to lack of space, any resource allocation strategy; it is however expected that improved performance levels might be achieved through proper power allocations algorithms. Finally, a joint co-design, wherein each system tries to achieve its target performance without causing too much harmful interference to the other system, is certainly worth being considered.

\section*{Acknowledgement}
The authors wish to thank Dr. Augusto Aubry (University of Naples "Federico II") for his help with the proof of the theorem reported in Section III, as well as the Editor, Prof. Tolga Duman, for the excellent and fast management of the paper review process.

\appendix

\subsection{SE closed-form expression derivation with PM channel estimation} \label{App_A}
If CM combining with $\mathbf{v}_k^{(q)}=\widehat{\bh}_k^{(q)}$, and PM channel estimation detailed in Section \ref{PM_CE} is assumed, the expectations in Eq. \eqref{SINR_general} can be computed as follows.
\begin{enumerate}
\item \textit{Compute} $\mathbb{E}\left[\widehat{\bh}_k^{(q)\, H} \mathbf{h}_k^{(q)}\right]$: Using Eqs. \eqref{eq:PM_estimation} and \eqref{eq:observable_estimation_q_k} and using the independence between channels, noise and clutter, we can write 
\begin{equation}
\begin{array}{lll}
{E}\left[\widehat{\bh}_k^{(q)\, H} \mathbf{h}_k^{(q)}\right] &= \ds\frac{1}{\sqrt{p_{\rm{p},k}}} \mathbb{E}\left[\mathbf{r}_{q,k}^H {\bh}_k^{(q)} \right] \\ & =\text{tr} \left( \mathbb{E}\left[\mathbf{h}_k^{(q)} {\bh}_k^{(q), \,  H} \right]\right)= M \beta_k^2 \, .
\end{array}
\label{mean_desired_PM}
\end{equation}

\item \textit{Compute} $\mathbb{E}\left[\left|\widehat{\bh}_k^{(q)\, H} \mathbf{h}_j^{(q)}\right|^2\right]$: Using again Eqs. \eqref{eq:PM_estimation} and \eqref{eq:observable_estimation_q_k}, and defining $\mathbf{\bar{c}}_{q,k}=  \mathcal{C}_q \widetilde{\mathbf{P}}_k^{(q)\, *} $ we can write \eqref{mean_interference_kj_PM}, shown on top of the next page, 
\begin{figure*}
\begin{equation}
\begin{array}{llll}
\mathbb{E}\left[\left|\widehat{\bh}_k^{(q)\, H} \mathbf{h}_j^{(q)}\right|^2\right] & = & \ds \frac{1}{p_{\rm{p},k}} \mathbb{E}\left[\left|\left( \ds \sum_{i=1}^K\sqrt{p_{\rm{p},i}}\mathbf{h}_i^{(q)} \mathbf{P}_i^{(q) \, T} \widetilde{\mathbf{P}}_k^{(q)\, *}
+ \widetilde{\mathbf{w}}_{q,k} + \mathbf{\bar{c}}_{q,k} \right) ^H \mathbf{h}_j^{(q)}\right|^2\right] 
 \stackrel{(a)}{=}
\ds \frac{p_{\rm{p},j}}{p_{\rm{p},k}}\mathbb{E}\left[\left| \mathbf{h}_j^{(q)\, H} \mathbf{h}_j^{(q)}\right|^2\right] \left| \mathbf{P}_j^{(q) \, T} \widetilde{\mathbf{P}}_k^{(q)\, *}\right|^2
\\ & &
+ \ds \sum_{\substack{i=1 \\ i \neq j}}^K {\ds \frac{p_{\rm{p},i}}{p_{\rm{p},k}}\mathbb{E}\left[\left| \mathbf{h}_i^{(q)\, H} \mathbf{h}_j^{(q)}\right|^2\right] \left| \mathbf{P}_i^{(q) \, T} \widetilde{\mathbf{P}}_k^{(q)\, *}\right|^2} 
 + \mathbb{E}\left[\left| \widetilde{\mathbf{w}}_{q,k}^H \mathbf{h}_j^{(q)}\right|^2\right] + \mathbb{E}\left[\left| \mathbf{\bar{c}}_{q,k}^H \mathbf{h}_j^{(q)}\right|^2\right] 
 \\  &\stackrel{(b)}{=}&  \ds \frac{p_{\rm{p},j}}{p_{\rm{p},k}}  \beta_j^4 M (M+1) \left| \mathbf{P}_j^{(q) \, T} \widetilde{\mathbf{P}}_k^{(q)\, *}\right|^2 
  + \ds \sum_{\substack{i=1 \\ i \neq j}}^K {\ds \frac{p_{\rm{p},i}}{p_{\rm{p},k}}\beta_j^2 \beta_i^2 M \left| \mathbf{P}_i^{(q) \, T} \widetilde{\mathbf{P}}_k^{(q)\, *}\right|^2 } + \ds \frac{1}{p_{\rm{p},k}} \beta_j^2 \sigma_w^2 M + \ds \frac{1}{p_{\rm{p},k}} \beta_j^2 \text{tr} \left( \widetilde{\mathbf{K}}_{{\rm c},k}^{(q)} \right) 
\\ &
 \stackrel{(c)}{=} & \ds \frac{p_{\rm{p},j}}{p_{\rm{p},k}} \beta_j^4 M^2 \left| \mathbf{P}_j^{(q) \, T} \widetilde{\mathbf{P}}_k^{(q)\, *}\right|^2 + \ds \frac{1}{p_{\rm{p},k}} \beta_j^2  \text{tr} \left( \mathbf{R}_{q,k}\right) \, ,
\end{array}
\label{mean_interference_kj_PM}
\end{equation}
\hrulefill
\end{figure*}
where equality $(a)$ follows from the fact that the variance of a sum of independent RVs is equal to the sum of the variances, equality $(b)$ follows from the independence between channels, noise and clutter and from the relation 
\begin{equation}
\mathbb{E}\left[\left| \mathbf{h}_j^{(q)\, H} \mathbf{h}_j^{(q)}\right|^2\right]= \beta_j^4 M (M+1) ,
\end{equation} 
and, finally, equality  $(c)$ follows from the linearity of the trace operator and from the definitions in Eq. \eqref{R_y}.
\item \textit{Compute} $\mathbb{E}\left[ \norm{\widehat{\bh}_k^{(q)}}^2\right]$: Using again Eqs. \eqref{eq:PM_estimation} and \eqref{eq:observable_estimation_q_k}, we have
\begin{equation}
\mathbb{E}\left[ \norm{\widehat{\bh}_k^{(q)}}^2\right]= \ds \frac{1}{p_{\rm{p},k}} \mathbb{E}\left[\mathbf{r}_{q,k}^H\mathbf{r}_{q,k}\right]=\ds \frac{1}{p_{\rm{p},k}} \text{tr}\left( \mathbf{R}_{q,k} \right)\, .
\label{mean_noise_PM}
\end{equation}
\item \textit{Compute} $\mathbb{E} \left[ \left| \widehat{\bh}_k^{(q)\, H} \mathbf{C}^{(n)} \right|^2\right]$: Using Eq. \eqref{clutter_covariance_ell_n} and the properties of trace operator we obtain
\begin{equation}
\begin{array}{lll}
\mathbb{E} \left[ \left| \widehat{\bh}_k^{(q)\, H} \mathbf{C}^{(n)} \right|^2\right]& =\mathbb{E} \left[  \widehat{\bh}_k^{(q)\, H} \mathbf{C}^{(n)} \mathbf{C}^{(n) \, H} \widehat{\bh}_k^{(q)} \right] \\ & =\ds \frac{1}{p_{\rm{p},k}} \text{tr}\left( \mathbf{R}_{q,k} \mathbf{K}_{\mathbf{C}^{(n)}} \right)
\end{array}
\label{mean_clutter_PM}
\end{equation}
\end{enumerate}
Substituting Eqs. \eqref{mean_desired_PM}, \eqref{mean_interference_kj_PM}, \eqref{mean_noise_PM}, and \eqref{mean_clutter_PM} in Eq. \eqref{SINR_general} we obtain Eq. \eqref{SINR_MR_PM}.

\subsection{SE closed form expression with LMMSE channel estimation} \label{App_B}
If CM combining with $\mathbf{v}_k^{(q)}=\widehat{\bh}_k^{(q)}$, and LMMSE channel estimation detailed in Section \ref{LMMSE_CE} is assumed, the expectations in Eq. \eqref{SINR_general} can be computed as follows.
\begin{enumerate}
\item \textit{Compute} $\mathbb{E}\left[\widehat{\bh}_k^{(q)\, H} \mathbf{h}_k^{(q)}\right]$: Denoting with $\widetilde{\bh}_k^{(q)}=\bh_k^{(q)}-\widehat{\bh}_k^{(q)}$ the channel estimation error, the well known LMMSE estimation property results in the fact that $\widehat{\bh}_k^{(q)}$ and $\widetilde{\bh}_k^{(q)}$ are independent. Using this property and substituting $\bh_k^{(q)}=\widetilde{\bh}_k^{(q)}+\widehat{\bh}_k^{(q)}$ we have
\begin{equation}
\begin{array}{lll}
{E}\left[\widehat{\bh}_k^{(q)\, H} \mathbf{h}_k^{(q)}\right] & =\mathbb{E}\left[\widehat{\bh}_k^{(q)\, H}\left( \widetilde{\bh}_k^{(q)}+\widehat{\bh}_k^{(q)}\right) \right] \\ &=\mathbb{E}\left[\widehat{\bh}_k^{(q)\, H}\widehat{\bh}_k^{(q)}\right]= \text{tr} \left(  \mathbb{E}\left[\widehat{\bh}_k^{(q)}\widehat{\bh}_k^{(q)\, H}\right]\right)\, .
\end{array}
\end{equation}
Using the definitions in Eqs. \eqref{MMSE_estimate} and \eqref{D_qk}, we have:
\begin{equation}
\mathbb{E}\left[\widehat{\bh}_k^{(q)\, H} \mathbf{h}_k^{(q)}\right]=\sqrt{p_{\rm{p},k}} \beta_k^2 \text{tr}\left( \mathbf{D}_{q,k} \right) \, .
\label{mean_desired}
\end{equation}

\item \textit{Compute} $\mathbb{E}\left[\left|\widehat{\bh}_k^{(q)\, H} \mathbf{h}_j^{(q)}\right|^2\right]$: Using Eqs. \eqref{eq:observable_estimation_q_k} and \eqref{MMSE_estimate} we can write \eqref{mean_interference_kj}, shown at the top of next page,
\begin{figure*}
\begin{equation}
\begin{array}{llll}
\mathbb{E}\left[\left|\widehat{\bh}_k^{(q)\, H} \mathbf{h}_j^{(q)}\right|^2\right]&= &\mathbb{E}\left[\left|\left( \ds \sum_{i=1}^K\sqrt{p_{\rm{p},i}}\mathbf{h}_i^{(q)} \mathbf{P}_i^{(q) \, T} \widetilde{\mathbf{P}}_k^{(q)\, *}
+ \widetilde{\mathbf{w}}_{q,k} + \mathbf{\bar{c}}_{q,k} \right) ^H \mathbf{D}_{q,k} \mathbf{h}_j^{(q)}\right|^2\right]  \stackrel{(a)}{=}
p_{\rm{p},j}\mathbb{E}\left[\left| \mathbf{h}_j^{(q)\, H} \mathbf{D}_{q,k} \mathbf{h}_j^{(q)}\right|^2\right] \left| \mathbf{P}_j^{(q) \, T} \widetilde{\mathbf{P}}_k^{(q)\, *}\right|^2
\\ & &
\ds + \sum_{\substack{i=1 \\ i \neq j}}^K {p_{\rm{p},i}\mathbb{E}\left[\left| \mathbf{h}_i^{(q)\, H} \mathbf{D}_{q,k} \mathbf{h}_j^{(q)}\right|^2\right] \left| \mathbf{P}_i^{(q) \, T} \widetilde{\mathbf{P}}_k^{(q)\, *}\right|^2} 
+ \mathbb{E}\left[\left| \widetilde{\mathbf{w}}_{q,k}^H \mathbf{D}_{q,k} \mathbf{h}_j^{(q)}\right|^2\right] + \mathbb{E}\left[\left| \mathbf{\bar{c}}_{q,k}^H\mathbf{D}_{q,k} \mathbf{h}_j^{(q)}\right|^2\right] 
\\ & 
\ds \stackrel{(b)}{=}  &p_{\rm{p},j} \beta_j^4 \left[ \text{tr} \left( \mathbf{D}_{q,k}\right)^2 + \text{tr} \left(\mathbf{D}_{q,k} \mathbf{D}_{q,k}\right)\right] \left| \mathbf{P}_j^{(q) \, T} \widetilde{\mathbf{P}}_k^{(q)\, *}\right|^2 
 + \ds \sum_{\substack{i=1 \\ i \neq j}}^K {p_{\rm{p},i}\beta_j^2 \beta_i^2 \text{tr} \left(\mathbf{D}_{q,k} \mathbf{D}_{q,k}\right) \left| \mathbf{P}_i^{(q) \, T} \widetilde{\mathbf{P}}_k^{(q)\, *}\right|^2 } + \\
 && \beta_j^2 \sigma_w^2 \text{tr} \left(\mathbf{D}_{q,k} \mathbf{D}_{q,k}\right) + \beta_j^2 \text{tr} \left(\mathbf{D}_{q,k} \mathbf{D}_{q,k} \widetilde{\mathbf{K}}_c^{(q)} \right) 
\stackrel{(c)}{=} p_{\rm{p},j} \beta_j^4 \text{tr} \left( \mathbf{D}_{q,k}\right)^2 \left| \mathbf{P}_j^{(q) \, T} \widetilde{\mathbf{P}}_k^{(q)\, *}\right|^2 + \sqrt{p_{\rm{p},k}}\beta_j^2 \beta_k^2 \text{tr} \left( \mathbf{D}_{q,k}\right) \, ,
\end{array}
\label{mean_interference_kj}
\end{equation}
\hrulefill
\end{figure*}
where $(a)$ follows from the fact that the variance of a sum of independent RVs is equal to the sum of the variances, $(b)$ follows from the independence between channels, noise and clutter  and from the relation 
\begin{equation}
\mathbb{E}\!\left[\left| \mathbf{h}_j^{(q) H} \mathbf{D}_{q,k} \mathbf{h}_j^{(q)}
\right|^2\right]\!\!= \!\!\beta_j^4 \!\left[ \text{tr} \left( \mathbf{D}_{q,k}\right)^2\!+ \text{tr} \left(\mathbf{D}_{q,k} \mathbf{D}_{q,k}\right)\right] ,
\end{equation}
 and $(c)$ follows from the linearity of the trace operator and from the definitions in Eqs. \eqref{D_qk} and \eqref{R_y}.
\item \textit{Compute} $\mathbb{E}\left[ \norm{\widehat{\bh}_k^{(q)}}^2\right]$: Using Eq. \eqref{mean_desired} we have
\begin{equation}
\mathbb{E}\left[ \norm{\widehat{\bh}_k^{(q)}}^2\right]=\mathbb{E}\left[\widehat{\bh}_k^{(q)\, H}\widehat{\bh}_k^{(q)}\right]=\sqrt{p_{\rm{p},k}} \beta_k^2 \text{tr}\left( \mathbf{D}_{q,k} \right)\, .
\label{mean_noise}
\end{equation}
\item \textit{Compute} $\mathbb{E} \left[ \left| \widehat{\bh}_k^{(q)\, H} \mathbf{C}^{(n)} \right|^2\right]$: Using Eq. \eqref{clutter_covariance_ell_n} and the properties of trace operator we obtain
\begin{equation}
\begin{array}{llll}
\mathbb{E} \left[ \left| \widehat{\bh}_k^{(q)\, H} \mathbf{C}^{(n)} \right|^2\right]
&=\mathbb{E} \left[  \widehat{\bh}_k^{(q)\, H} \mathbf{C}^{(n)} \mathbf{C}^{(n) \, H} \widehat{\bh}_k^{(q)} \right] \\&=\sqrt{p_{\rm{p},k}} \beta_k^2 \text{tr}\left( \mathbf{D}_{q,k} \mathbf{K}_{\mathbf{C}^{(n)}} \right)
\label{mean_clutter}
\end{array}
\end{equation}
\end{enumerate}
Substituting Eqs. \eqref{mean_desired}, \eqref{mean_interference_kj}, \eqref{mean_noise}, and \eqref{mean_clutter} in Eq. \eqref{SINR_general} we finally obtain Eq. \eqref{SINR_MR_MMSE}.

\bibliographystyle{IEEEtran}
\bibliography{MyReference}
%\bibliographystyle{IEEEtran}

%\linespread{1.33}
%\bibliography{FracProg_SB,finalRefs,references}

\end{document}